\documentclass[twocolumn,twosides]{IEEEtran} 

\makeatletter
\let\NAT@parse\undefined
\makeatother

\usepackage{amsmath,amssymb,amsfonts}
\usepackage[final]{graphicx}
\usepackage{flushend}
\usepackage[numbers,sort&compress]{natbib}

\usepackage{ucs} 
\usepackage[utf8x]{inputenc}
\usepackage[usenames,dvipsnames]{xcolor}
\usepackage{tikz}
\usepackage{tkz-tab}
\usepackage{multirow}
\usepackage{latexsym}
\usepackage{amssymb}
\usepackage{amsmath}

\makeatletter
\def\blfootnote{\xdef\@thefnmark{}\@footnotetext}
\makeatother


\newtheorem{corollary}{Corollary}
\newtheorem{theorem}{Theorem}

\newtheorem{lemma}{Lemma}
\newtheorem{definition}{Definition}

\def \M{\mathcal{M}}

\def \L {{\mathcal L }}

\def \K {{\mathcal K }}

\def\M{ {\mathcal{M}} }

\begin{document}
\title{\Huge The $\kappa$-$\mu$ Shadowed Fading Model\\ with Integer Fading Parameters}

\author{
\vspace{3mm}
\authorblockN{F. Javier Lopez-Martinez, Jose F. Paris and Juan M. Romero-Jerez}}


\maketitle
\begin{abstract}
\blfootnote{\noindent  This work has been submitted to the IEEE for possible publication. Copyright may be transferred without notice, after which this version may no longer be accesible.\\
\noindent F. J. Lopez-Martinez and J. F. Paris are with Departmento de Ingenier\'ia de Comunicaciones, Universidad de Malaga - Campus de Excelencia Internacional Andaluc\'ia Tech., Malaga 29071, Spain. Contact e-mail: fjlopezm@ic.uma.es. \\ \indent J. M. Romero-Jerez is with Departmento de Tecnolog\'ia Eectr\'onica, Universidad de Malaga - Campus de Excelencia Internacional Andaluc\'ia Tech., Malaga 29071, Spain.
}
We show that the popular and general $\kappa$-$\mu$ shadowed fading model with integer fading parameters $\mu$ and $m$ can be represented as a mixture of squared Nakagami-$\hat m$ (or Gamma) distributions. Thus, its PDF and CDF can be expressed in closed-form in terms of a finite number of elementary functions (powers and exponentials). The main implications arising from such connection are then discussed, which can be summarized as: (1) the performance evaluation of communication systems operating in $\kappa$-$\mu$ shadowed fading becomes as simple as if a Nakagami-$\hat m$ fading channel was assumed; (2) the $\kappa$-$\mu$ shadowed distribution can be used to approximate the $\kappa$-$\mu$ distribution using a closed-form representation in terms of elementary functions, by choosing a sufficiently large value of $m$; and (3) restricting the parameters $\mu$ and $m$ to take integer values has limited impact in practice when fitting the $\kappa$-$\mu$ shadowed fading model to field measurements. As an application example, the average channel capacity of communication systems operating under $\kappa$-$\mu$ shadowed fading is obtained in closed-form.
\end{abstract}

\vspace{0mm}
\begin{keywords}
Wireless channel modeling, $\kappa$-$\mu$ shadowed fading, Nakagami fading, probability density function, cumulative distribution function, Gamma distribution.
\end{keywords}

\IEEEpeerreviewmaketitle

\section{Introduction}

The $\kappa$-$\mu$ shadowed fading model was introduced in \cite{Paris2014}, and right after in \cite{Cotton2015} in an independent work, as a generalization of the popular $\kappa$-$\mu$ model proposed by Yacoub \cite{Yacoub07}. { Other generalizations of the $\kappa$-$\mu$ distribution using the inverse Gamma distribution \cite{Yoo2015}, the inverse Gaussian distribution \cite{Sofotasios2013}, and a multiplicative composition with the Gamma distribution \cite{Yoo2016}, are also available in the literature.} Recently, it was shown that the apparently unrelated $\eta$-$\mu$ distribution was also a particular case of the $\kappa$-$\mu$ shadowed distribution \cite{Laureano2016}, thus encapsulating this set of popular and general fading distributions in the literature in a single model. Ever since its inception, the $\kappa$-$\mu$ shadowed fading model has gained a remarkable attention in the literature due to its versatility on modeling propagation conditions ranging from very favorable to worse-than-Rayleigh fading. It also provides an good fit to field measurements in diverse scenarios like device-to-device communications \cite{Cotton2015}, or underwater acoustic channels \cite{Paris2014,Francis2016}.

Unlike other general fading models \cite{Durgin2002,Salo2007,Rao2015}, its chief probability functions (PDF and CDF) are given in closed-form. That being said, the computation of its CDF still requires for the evaluation of the confluent bivariate hypergeometric function $\Phi_2$ \cite[9.262.2]{Gradstein2007}, which is not yet included in commercial mathematical software packages. This fact has not prevented the widespread use of the $\kappa$-$\mu$ shadowed distribution in a number of practical scenarios of interest \cite{Kumar2015,Kalyani2015,Bhatnagar2015}. However, these results have a much more complicated form than their counterparts when assuming, for instance, the simpler and extremely popular Nakagami-$\hat m$ (or simply Nakagami) fading model\footnote{In order to avoid any confusion between the $m$ parameter of the $\kappa$-$\mu$ shadowed fading model and the homonymous parameter of the Nakagami-$\hat m$ fading model, we use a $m$ with a superscript to denote the latter.}. 

In this paper, we show that the $\kappa$-$\mu$ shadowed PDF and CDF can be expressed in terms of a finite number of elementary functions for a proper choice of the fading severity parameter values. Specifically, we show that the $\kappa$-$\mu$ shadowed distribution can be expressed as a mixture of squared Nakagami-$\hat m$ distributions, when the parameters $\mu$ and $m$ take integer values. As we will later see, such restriction has little effect in practice when fitting field measurements to the $\kappa$-$\mu$ shadowed distribution, while being extremely convenient from a computational perspective.

This connection considerably facilitates the performance evaluation of communication systems operating in $\kappa$-$\mu$ shadowed fading channels. In fact, we show that \textit{any performance metric} that is calculated by averaging over the distribution of the SNR in $\kappa$-$\mu$ shadowed fading (e.g. bit error rate, capacity, outage probability...) can be readily and directly obtained as a linear combination of the results obtained when assuming Nakagami-$\hat m$ fading; the values of the weights for this linear combination are the coefficients of the mixture, which are given in closed-form. We also show that the $\kappa$-$\mu$ shadowed fading model can be used to approximate the $\kappa$-$\mu$ distribution with arbitrary precision, by simply choosing a sufficiently large value of $m$. Thus, the computational benefits of the new representation of the $\kappa$-$\mu$ shadowed fading model in terms of elementary functions can be extended to the $\kappa$-$\mu$ distribution (and also to the Rician distribution as a special case for $\mu=1$).

As a direct application, we exemplify the usefulness of the results here unveiled to obtain exact expressions for the average capacity of the $\kappa$-$\mu$ shadowed fading channel, which are considerably simpler than those originally obtained in \cite{Celia2014}.

The remainder of this paper is structured as follows: in Section \ref{new}, the new expressions for the PDF and the CDF of the $\kappa$-$\mu$ shadowed fading model are given in terms of a finite sum of elementary functions. Then, a number of relevant implications and useful properties arising from these results are discussed in Section \ref{implications}. An application example is presented in Section \ref{apps}, whereas the main conclusions are outlined in Section \ref{conc}.

\section{New statistics for the $\kappa$-$\mu$ shadowed distribution.}
\label{new}

Throughout this paper, we will characterize the distribution of the received power envelope in $\kappa$-$\mu$ shadowed fading channels, or equivalently, the instantaneous SNR $\gamma$ at the receiver. Note that characterizing the distribution of the amplitude envelope $r$ is straightforward by a simple change of variables $r=\sqrt{\gamma}.$

\begin{definition}
A random variable $\gamma$ following a $\kappa$-$\mu$ shadowed distribution will be denoted as $\gamma  \sim \mathcal{S}\left( {\bar \gamma ;\kappa ,\mu ,m} \right)$, and its PDF will be given by
\begin{equation}
\begin{split}
\label{eq:001}
f_\mathcal{S}\left( {\bar \gamma ;\kappa ,\mu ,m;x} \right) & =
 \frac{\mu^\mu m^m (1+\kappa)^\mu }
  {\Gamma(\mu)(m+\mu\kappa)^m\overline{\gamma}} \left(\frac{x}{\overline{\gamma}}\right)^{\mu-1}
\\ \cdot & e^{-\frac{\mu(1+\kappa)}{\overline{\gamma}}x}
  {_1F_1}   \left(m, \mu ; \frac{\mu^2\kappa(1+\kappa)}{\mu\kappa+m} \frac{x}{\overline{\gamma}}\right),
\end{split}
\end{equation}
where ${_1F_1}(\cdot)$ is the confluent hypergeometric function of the first kind \cite[eq. (9.210.1)]{Gradstein2007}. 
\end{definition}
\vspace{3mm}
\begin{definition}
A random variable $\gamma$ following a $\kappa$-$\mu$ distribution will be denoted as ${\gamma  \sim \mathcal{KM}\left( {\bar \gamma ;\kappa ,\mu} \right)}$, and its PDF will be given by
\begin{equation}
\label{eq:001b}
\begin{split}
f_{\mathcal{KM}}(\gamma)=&\frac{\mu(1+\kappa)^\frac{\mu+1}{2}}{\bar{\gamma}\kappa^\frac{\mu-1}{2}e^{\mu\kappa}}\left(\frac{\gamma}{\bar{\gamma}}\right)^{\frac{\mu-1}{2}}\\
&\times e^{-\frac{\mu(1+\kappa)\gamma}{\bar{\gamma}}}I_{\mu-1}\left(2\mu\sqrt{\frac{\kappa(1+\kappa)\gamma}{\bar{\gamma}}}\right),
\end{split}
\end{equation}
where $I_\nu(\cdot)$ is the $\nu$-th order modified Bessel function of first kind.
\end{definition}
\vspace{3mm}
\begin{definition}
A random variable $\gamma$ following a squared Nakagami distribution will be denoted as $\gamma  \sim \K\left( {\bar \gamma ;\hat m} \right)$, and its PDF, assuming $\hat m \in {\mathbb N}$, will be given by
\begin{equation}
\begin{split}
\label{eq:002}
f_\K \left( {\bar \gamma ;\hat m;x} \right) = \left( {\frac{\hat m}
{{\bar \gamma }}} \right)^{\hat m} \frac{{x^{\hat{m} - 1} }}
{{\left( {\hat{m} - 1} \right)!}}e^{ - x\hat{m}/\bar \gamma}.
\end{split}
\end{equation}
Note that the squared Nakagami distribution is equivalent to a Gamma distribution with shape parameter $\hat m$ and scale parameter $\bar \gamma / \hat{m}$.
\end{definition}
\vspace{3mm}
After these preliminary definitions, we now provide new expressions for the PDF and CDF of the $\kappa$-$\mu$ shadowed distribution for positive integer values of the fading severity parameters $\mu$ and $m$.
\vspace{3mm}
\begin{theorem} \label{T1}
Let $\gamma$ be a random variable such that $\gamma  \sim \mathcal{S}\left( {\bar \gamma ;\kappa ,\mu ,m} \right)$ and let  $\mu,m \in {\mathbb N}$. Then, $\gamma$ is a mixture of squared Nakagami distributions, which PDF is given as follows:
\vspace{2mm}

$ \bullet$ If $m<\mu$
\begin{equation}
\label{eq:003}
\begin{split}
  & f_\mathcal{S} \left( {\bar \gamma ;\kappa ,\mu ,m;x} \right) = \sum\limits_{j = 1}^{\mu  - m} {A_{1j} } f_\K \left( {\omega _{A1} ;\mu  - m - j + 1;x} \right)  \cr 
  & \quad \quad \quad \quad  + \sum\limits_{j = 1}^m {A_{2j} } f_\K \left( {\omega _{A2} ;m - j + 1;x} \right). \cr
\end{split}
\end{equation}

$ \bullet$ If $m \geq \mu$
\begin{equation}
\label{eq:004}
\begin{split}
f_S \left( {\bar \gamma ;\kappa ,\mu ,m;x} \right) = \sum\limits_{j = 0}^{m - \mu } {B_j } f_K \left( {\omega _B ;m - j;x} \right),
\end{split}
\end{equation}
where
\begin{equation}
\label{eq:005}
\begin{split}
  & A_{1j}  =  \left( { - 1} \right)^m \left( {\begin{array}{c}
   {m + j - 2}  \\ 
   {j - 1}  \\ 
 \end{array} } \right) \\&  \ \ \ \ \ \ \times \left[ {\frac{m}
{{\mu \kappa  + m}}} \right]^{ m} \left[ {\frac{{\mu \kappa }}
{{\mu \kappa  + m}}} \right]^{ - m - j + 1} ,  \cr 
  & A_{2j}  = \left( { - 1} \right)^{j - 1} \left( {\begin{array}{c}
   {\mu  - m + j - 2}  \\ 
   {j - 1}  \\ 
 \end{array} } \right) \\&  \ \ \ \ \ \ \times \left[ {\frac{m}
{{\mu \kappa  + m}}} \right]^{j - 1} \left[ {\frac{{\mu \kappa }}
{{\mu \kappa  + m}}} \right]^{m - \mu  - j + 1} ,  \cr 
  & B_j  = \left( {\begin{array}{c}
   {m - \mu }  \\ 
   j  \\ 
 \end{array} } \right)\left[ {\frac{m}
{{\mu \kappa  + m}}} \right]^j \left[ {\frac{{\mu \kappa }}
{{\mu \kappa  + m}}} \right]^{m - \mu  - j} , \cr
\end{split}
\end{equation}
and where we have defined
\begin{equation}
\label{eq:006}
\begin{split}
  & \omega _{A1}  \triangleq \Delta_1 \left( {\mu  - m - j + 1} \right),\quad   \cr 
  & \omega _{A2}  \triangleq \Delta_2 \left( {m - j + 1} \right),  \cr 
  & \  \omega _B  \triangleq \Delta_2 \left( {m - j} \right), \cr
\end{split}
\end{equation}
with
\begin{equation}
\label{eq:007}
\begin{split}
  & \Delta_1  \triangleq \frac{{\bar \gamma }}
{{\mu \left( {1 + \kappa } \right)}},\quad   \cr 
  & \Delta_2  \triangleq \frac{{\mu \kappa  + m}}
{m}\frac{{\bar \gamma }}
{{\mu \left( {1 + \kappa } \right)}}. \cr
\end{split}
\end{equation}

\end{theorem}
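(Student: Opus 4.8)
The plan is to pass to the moment-generating-function (MGF) domain, where the assertion collapses to a partial-fraction expansion of a rational function. First I would compute $M_\gamma(s)=\E\!\left[e^{s\gamma}\right]$ directly from \eqref{eq:001} using the Laplace-transform identity $\int_0^\infty t^{b-1}e^{-st}\,{}_1F_1(a;b;\lambda t)\,dt=\Gamma(b)\,s^{-b}(1-\lambda/s)^{-a}$ \cite[eq. (7.621.4)]{Gradstein2007}. Writing $\alpha\triangleq 1/\Delta_1$, $\beta\triangleq 1/\Delta_2$ and $C\triangleq \mu^\mu m^m(1+\kappa)^\mu/[(\mu\kappa+m)^m\bar\gamma^\mu]$, and noting that $\alpha-c=\beta$ for the argument coefficient $c$ of the ${}_1F_1$, this collapses to the compact form \[ M_\gamma(s)=C\,(\alpha-s)^{m-\mu}(\beta-s)^{-m}, \] which coincides with the known $\kappa$-$\mu$ shadowed MGF of \cite{Paris2014} and could equally be cited. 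The key observation is that, since $\mu,m\in\N$, $M_\gamma$ is a \emph{rational} function of $s$ whose only poles are $s=\alpha$ (of order $\mu-m$, present only when $m<\mu$) and $s=\beta$ (of order $m$).

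Next I would record the MGF of the mixture components. A squared-Nakagami variate $\K(\hat m\Delta;\hat m)$ --- one whose first (mean) argument equals $\hat m$ times a scale $\Delta$ --- has, from \eqref{eq:002}, the MGF $(1/\Delta)^{\hat m}(1/\Delta-s)^{-\hat m}$. By \eqref{eq:006} every component in \eqref{eq:003}--\eqref{eq:004} has mean exactly $\hat m\Delta_1$ or $\hat m\Delta_2$, so each contributes a term proportional to $(\alpha-s)^{-i}$ or $(\beta-s)^{-i}$. Hence the theorem is equivalent to the statement that $M_\gamma(s)$ admits a partial-fraction expansion into precisely these terms, the weight attached to each $(\alpha-s)^{-i}$ (resp. $(\beta-s)^{-i}$) being its residue-coefficient multiplied by $\Delta_1^i$ (resp. $\Delta_2^i$) to renormalize the bare pole into a Gamma MGF.

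I would then carry out the decomposition case by case. For $m<\mu$ the ratio $C/[(\alpha-s)^{\mu-m}(\beta-s)^m]$ is proper, and its principal part at each pole is obtained by Taylor-expanding the opposite factor about that pole; the generalized binomial series $(1+u)^{-b}=\sum_k\binom{b+k-1}{k}(-u)^k$ supplies exactly the coefficients $\binom{m+j-2}{j-1}$ (pole $\alpha$) and $\binom{\mu-m+j-2}{j-1}$ (pole $\beta$). Collecting $(\alpha-s)^{-i}$ with $i=\mu-m-j+1$ and multiplying by $\Delta_1^i$ yields the $A_{1j}$; collecting $(\beta-s)^{-i}$ with $i=m-j+1$ and multiplying by $\Delta_2^i$ yields the $A_{2j}$. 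For $m\ge\mu$ the numerator $(\alpha-s)^{m-\mu}$ is a polynomial, and expanding it about $s=\beta$ through $\alpha-s=(\beta-s)+(\alpha-\beta)$ and the ordinary binomial theorem produces the $B_j$ at once. For this second case one may instead bypass the MGF entirely: Kummer's transformation ${}_1F_1(m;\mu;z)=e^z\,{}_1F_1(\mu-m;\mu;-z)$ truncates the series to a degree-$(m-\mu)$ polynomial, and multiplying by the prefactor of \eqref{eq:001} reproduces \eqref{eq:004} termwise.

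Finally, uniqueness of the inverse Laplace transform upgrades the MGF identity to the claimed PDF identity \eqref{eq:003}--\eqref{eq:004}. I expect the main obstacle to be purely the algebraic bookkeeping in the decomposition step: one must track the interacting powers of $\alpha$, $\beta$, and $\alpha-\beta=\mu^2\kappa(1+\kappa)/[(\mu\kappa+m)\bar\gamma]$ together with the conversion factors $\Delta_1^i,\Delta_2^i$, and then collapse the resulting products into the clean ratios $m/(\mu\kappa+m)$ and $\mu\kappa/(\mu\kappa+m)$ carrying the correct sign patterns $(-1)^m$ and $(-1)^{j-1}$. These simplifications are routine but error-prone; every other step is mechanical.
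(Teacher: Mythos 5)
Your proposal is correct, and for $m<\mu$ it is essentially the paper's own proof: both arguments start from the rational MGF $\M(s)=(1-\Delta_1 s)^{m-\mu}(1-\Delta_2 s)^{-m}$ of \cite{Paris2014} and obtain \eqref{eq:003} by partial-fraction expansion followed by termwise Laplace inversion; the principal-part coefficients you describe via the generalized binomial series are exactly the paper's coefficients in \eqref{eq:A5}. Where you genuinely diverge is the case $m\geq\mu$. The paper inverts the MGF by combining the modulation and derivative properties of the Laplace transform, reducing the inversion to the $(m-\mu)$-th derivative of $x^{m-1}e^{x/\Delta_3}$ with $\Delta_3\triangleq\Delta_1\Delta_2/(\Delta_2-\Delta_1)$, and then expands that derivative by the Leibniz rule to make the binomial coefficients in $B_j$ appear. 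You instead stay entirely in the $s$-domain: since the numerator $(\alpha-s)^{m-\mu}$ is a polynomial, the identity $\alpha-s=(\beta-s)+(\alpha-\beta)$ and the ordinary binomial theorem expand $\M(s)$ directly into the Gamma MGFs $(\beta-s)^{-(m-j)}$, $j=0,\dots,m-\mu$; one then checks that $\binom{m-\mu}{j}C\,\Delta_2^{m-j}(\alpha-\beta)^{m-\mu-j}$, with your constant $C=\Delta_1^{m-\mu}/\Delta_2^{m}$, collapses to the stated $B_j$ because the powers of $\mu(1+\kappa)/\bar\gamma$ cancel and only the ratios $m/(\mu\kappa+m)$ and $\mu\kappa/(\mu\kappa+m)$ survive. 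This buys a shorter, purely algebraic derivation with no differentiation under the inverse transform and no auxiliary constant $\Delta_3$; the paper's route, in exchange, produces the PDF explicitly in the time domain rather than invoking uniqueness of Laplace inversion for that case. Your secondary suggestion for $m\ge\mu$ --- Kummer's transformation ${}_1F_1(m;\mu;z)=e^{z}\,{}_1F_1(\mu-m;\mu;-z)$, which truncates the hypergeometric series to a degree-$(m-\mu)$ polynomial and yields \eqref{eq:004} termwise from the PDF itself --- is also valid and bypasses the MGF altogether, something the paper does not do. The algebraic bookkeeping you defer (extracting principal parts at the two poles and normalizing by $\Delta_1^i$, $\Delta_2^i$) is real but routine, and it is the same bookkeeping the paper carries out; nothing essential is missing.
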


\begin{proof}
See Appendix \ref{App1}.
\end{proof}
\vspace{2mm}

Expressions (\ref{eq:003}) and (\ref{eq:004}) give an exact representation of the $\kappa$-$\mu$ shadowed distribution in terms of a finite mixture of squared Nakagami distributions. With this representation of the PDF, which is new in the literature to the best of our knowledge, a similar result for the $\kappa$-$\mu$ shadowed CDF is now obtained.
\vspace{2mm}

\begin{table*}[t]
  \renewcommand{\arraystretch}{3}
\centering
{
\caption{Parameter values for the $\kappa$-$\mu$ shadowed distribution with integer $\mu$ and $m$,}
\label{table02}
{
\begin{tabular}{|c|c|}
\hline\hline
Case $\mu>m$ & Case $\mu \leq m$ \\ \hline\hline 
$M=\mu$ & $M=m-\mu$  \\ \hline
 $C_i=\begin{cases} 
      0 & i=0 \\
       \left( { - 1} \right)^m \binom{m+i-2}{i-1}\times \left[ {\frac{m}
{{\mu \kappa  + m}}} \right]^{ m} \left[ {\frac{{\mu \kappa }}
{{\mu \kappa  + m}}} \right]^{ - m - i + 1}  & 0<i\leq \mu-m \\
      \left( { - 1} \right)^{i-\mu+m - 1} \binom{i-2}{i-\mu+m-1} \times \left[ {\frac{m}
{{\mu \kappa  + m}}} \right]^{i-\mu+m - 1} \left[ {\frac{{\mu \kappa }}
{{\mu \kappa  + m}}} \right]^{-i + 1}  & \mu-m < i \leq \mu 
   \end{cases}$
 & $C_i=\binom{m-\mu}{i}\left[ {\frac{m}
{{\mu \kappa  + m}}} \right]^i \left[ {\frac{{\mu \kappa }}
{{\mu \kappa  + m}}} \right]^{m - \mu  - i}$  \\ \hline
\ $m_i=\begin{cases} \mu-m-i+1, & 0\leq i\leq \mu-m \\
   \mu-i+1 & \mu-m < i \leq \mu 
   \end{cases}$ & $m_i=m-i$  \\ \hline
    $\Omega_i=\begin{cases}  \frac{{\bar \gamma }}
{{\mu \left( {1 + \kappa } \right)}}, & 0\leq i\leq \mu-m \\
   \frac{{\mu \kappa  + m}}
{m}\frac{{\bar \gamma }}
{{\mu \left( {1 + \kappa } \right)}} & \mu-m < i \leq \mu 
   \end{cases}$ & $\Omega_i=\frac{{\mu \kappa  + m}}
{m}\frac{{\bar \gamma }}
{{\mu \left( {1 + \kappa } \right)}}$  \\ \hline
\end{tabular}
}
}
\end{table*}

\begin{corollary} \label{C1}
Let $\gamma$ be a random variable such that $\gamma  \sim \mathcal{S}\left( {\bar \gamma ;\kappa ,\mu ,m} \right)$ and let  $\mu,m \in {\mathbb N}$, then, the CDF of $\gamma$ is given as follows:

$ \bullet$ If $m<\mu$
\begin{equation}
\label{eq:007a}
\begin{split}
  & F_\mathcal{S} \left( {\bar \gamma ;\kappa ,\mu ,m;x} \right) = 1 - \sum\limits_{j = 1}^{\mu  - m} {A_{1j} e^{ - x/\Delta_1 } \sum\limits_{r = 0}^{\mu  - m - j} {\frac{1}
{{r!}}\left( {\frac{x}
{{\Delta_1 }}} \right)^r } }   \cr 
  & \quad \quad \quad  - \sum\limits_{j = 1}^m {A_{2j} e^{ - x/\Delta_2 } \sum\limits_{r = 0}^{m - j} {\frac{1}
{{r!}}\left( {\frac{x}
{{\Delta_2 }}} \right)^r .} }  \cr
\end{split}
\end{equation}

$ \bullet$ If $m \geq\mu$
\begin{equation}
\label{eq:007b}
\begin{split}
F_\mathcal{S} \left( {\bar \gamma ;\kappa ,\mu ,m;x} \right) = 1 - \sum\limits_{t = 0}^{m - \mu } {B_j^{} } e^{ - x/\Delta_2 } \sum\limits_{r = 0}^{m - j - 1} {\frac{1}
{{r!}}\left( {\frac{x}
{{\Delta_2 }}} \right)^r } .
\end{split}
\end{equation}

\end{corollary}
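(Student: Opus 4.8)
The plan is to leverage Theorem \ref{T1} directly: since the PDF of $\gamma$ is already expressed there as a finite mixture of squared Nakagami (Gamma) densities, and since integration against $x$ is linear, the CDF is simply the identically-weighted mixture of the individual component CDFs. First I would recall that for a squared Nakagami random variable with integer shape parameter $\hat m$ and mean $\bar\gamma$, i.e. the density $f_\K(\bar\gamma;\hat m;x)$ in (\ref{eq:002}), the CDF admits the classical finite-sum representation of the regularized lower incomplete gamma function,
\[
F_\K\left(\bar\gamma;\hat m;x\right)=1-e^{-\hat m x/\bar\gamma}\sum_{r=0}^{\hat m-1}\frac{1}{r!}\left(\frac{\hat m x}{\bar\gamma}\right)^r,
\]
which follows for integer $\hat m$ by repeated integration by parts (equivalently, by termwise integration of the exponential's power series). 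This is the only analytic ingredient needed beyond Theorem \ref{T1}.

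The crucial observation is that the scale parameters $\omega_{A1}$, $\omega_{A2}$ and $\omega_B$ in (\ref{eq:006}) were deliberately defined as the product of $\Delta_1$ (respectively $\Delta_2$) with the corresponding shape parameter. Consequently, in each component the rate $\hat m/\bar\gamma$ collapses to exactly $1/\Delta_1$ (for the $A_{1j}$ terms) or $1/\Delta_2$ (for the $A_{2j}$ and $B_j$ terms), independently of the summation index $j$. Substituting $\hat m=\mu-m-j+1$ with $\bar\gamma=\omega_{A1}$ then yields the common factor $e^{-x/\Delta_1}$ and polynomial argument $x/\Delta_1$ truncated at $r=\mu-m-j$; the analogous substitutions produce $e^{-x/\Delta_2}$ with argument $x/\Delta_2$ truncated at $r=m-j$ for the $A_{2j}$ family and at $r=m-j-1$ for the $B_j$ family. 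This accounts term-by-term for every summand appearing in (\ref{eq:007a}) and (\ref{eq:007b}).

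Finally I would assemble the mixture. Writing $F_\mathcal{S}=\sum_j A_{1j}F_\K^{(A1,j)}+\sum_j A_{2j}F_\K^{(A2,j)}$ in the case $m<\mu$ (and the single-family analogue with the $B_j$ in the case $m\geq\mu$), and inserting the finite-sum CDF above, the leading constants separate out as $\bigl(\sum_j A_{1j}+\sum_j A_{2j}\bigr)$ multiplying unity, while the exponential--polynomial parts carry the corresponding weights. The main (and essentially only) subtle point will be justifying that these weights sum to one, so that the constant term reduces cleanly to $1$. This is immediate, however, because Theorem \ref{T1} asserts that the expression is a genuine PDF: each $f_\K$ integrates to one over $(0,\infty)$, and hence $\sum_j A_{1j}+\sum_j A_{2j}=1$ (and likewise $\sum_j B_j=1$) is forced by $\int_0^\infty f_\mathcal{S}\,dx=1$. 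With this normalization in hand, the constant contributions collapse to $1$ and the remaining sums reproduce exactly (\ref{eq:007a}) and (\ref{eq:007b}), completing the proof.
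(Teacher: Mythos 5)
Your proposal is correct and follows essentially the same route as the paper: invoke Theorem \ref{T1} and integrate the mixture term by term using the finite-sum (Erlang) form of the squared Nakagami CDF, noting that the scale definitions in (\ref{eq:006}) make every component's rate collapse to $1/\Delta_1$ or $1/\Delta_2$. In fact you spell out one detail the paper leaves implicit --- that $\sum_j A_{1j}+\sum_j A_{2j}=1$ (resp. $\sum_j B_j=1$), forced by normalization of $f_\mathcal{S}$, which is exactly what lets the constant terms collapse to the leading $1$ in (\ref{eq:007a}) and (\ref{eq:007b}).
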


\vspace{2mm}
\begin{proof}
Using Theorem \ref{T1} and considering that the CDF of a squared Nakagami random variable is given by 
\begin{equation}
\label{eq:B1}
\begin{split}
F_\K \left( {\bar \gamma ;\hat m;x} \right) = 1 - e^{ - x/\Delta } \sum\limits_{r = 0}^{ \hat m - 1} {\frac{1}
{{r!}}\left( {\frac{{x}}
{{\Delta }}} \right)^r }, 
\end{split}
\end{equation}
where we have defined $\Delta \triangleq \bar \gamma /\hat m$, the proof is completed.

\end{proof}
{
A more compact and unified form for the PDF and the CDF of the $\kappa$-$\mu$ shadowed fading distribution with integer fading parameters can be obtained after some manipulation, yielding

\begin{align}
\label{eqPDF}
\begin{split}
f_S \left( {\bar \gamma ;\kappa ,\mu ,m;x} \right) = \sum\limits_{i = 0}^{M} {C_i } \frac{x^{m_i-1}}{(m_i-1)!}\frac{1}{\Omega_i^{m_i}}e^{-\frac{x}{\Omega_i}},
\end{split}
\end{align}
and
\begin{align}
\label{eqCDF}
F_\mathcal{S} \left( {\bar \gamma ;\kappa ,\mu ,m;x} \right) = 1-\sum_{i=0}^{M}C_i e^{-\frac{x}{\Omega_i}}\sum_{r=0}^{m_i-1}\frac{1}{r!}\left(\frac{x}{\Omega_i}\right)^{r},
\end{align}
where the parameters $m_i$, $M$ and $\Omega_i$ are expressed in Table \ref{table02} in terms of the parameters of the $\kappa$-$\mu$ shadowed distribution, namely $\kappa$, $\mu$, $m$ and $\bar\gamma$.
}
\section{Discussion}
\label{implications}
After presenting the new results for the statistics of the $\kappa$-$\mu$ shadowed distribution with integer fading parameters, we now discuss about the main implications and insights that can be obtained from these results.
\subsection{Finite mixture representation}
Taking a deeper look at Theorem \ref{T1}, one can observe that the expression for the mixture of squared Nakagami distributions has different form depending on whether $m$ is larger or smaller than $\mu$. 

When $m\geq\mu$ the $\kappa$-$\mu$ shadowed distribution is expressed as a \textit{proper mixture}\footnote{We use the term \textit{proper mixture} to denote any mixture distribution which can be expressed as a convex combination (i.e. a weighted sum with non-negative coefficients that sum to 1) of other distributions. We also use the term \textit{improper mixture} to denote any mixture distribution on which the mixture coefficients are not restricted to be non-negative.} of squared Nakagami distributions, on which the values of the weights $B_j$ in (\ref{eq:005}) correspond to those of the binomial distribution, whose probability mass function is given by
\begin{equation}
f(k;n,p)=\Pr\{X=k\}=\binom{n}{k}p^k(1-p)^{n-k},
\end{equation}
with $n=m-\mu$ and $p=\frac{m}{\mu\kappa+m}$. In this situation, the $\kappa$-$\mu$ shadowed fading model can be regarded as a superposition of $m-\mu$ parallel channels affected by Nakagami fading with different fading severities, being each of such channels used with a given probability{\footnote{We must here note that this phenomenon is purely mathematical, as it arises from the observation of the new form of the PDF here derived. To the best of our knowledge, such observation does not have any connection with the physical model of the $\kappa$-$\mu$ shadowed fading distribution (note that the physical models that originate this distribution \cite{Laureano2016} can be regarded as coherent combinations, in a maximal ratio combining form, of $\mu$ Rician shadowed variates).}} $p$.

Conversely, for $m<\mu$ the PDF in (\ref{eq:003}) is given in terms of an \textit{improper mixture}; i.e., (\ref{eq:003}) is expressed as a linear combination of two different sets of squared Nakagami distributions with coefficients $A_{1j}$ and $A_{2j}$, on which the mixture coefficients are not necessarily non-negative.

Another important insight arises by inspecting eq. (\ref{eq:A3}) in the Appendix. We note that such MGF is expressed as the product of the MGFs of two Gamma distributions with scale parameters $\Delta_1$ and $\Delta_2$, and shape parameters $\mu-m$ and $m$, respectively. Thus, a $\kappa$-$\mu$ shadowed RV can be generated as the sum of two independent gamma RVs with such scale and shape parameters, provided that $m<\mu$. Note that this holds for any $\{m,\mu\}\in\mathbb{R}^+$. For the specific case on which the $\kappa$-$\mu$ shadowed distribution reduces to the $\eta$-$\hat\mu$ distribution, i.e. $\mu=2\hat\mu$ and $m=\hat\mu$ with $\kappa=(1-\eta)/2\eta$ \cite{Laureano2016}, this observation coincides with the one given in \cite{Ermolova2011}.

\subsection{Convergence to the $\kappa$-$\mu$ distribution}

Intuitively, if we let $m\rightarrow\infty$ in the $\kappa$-$\mu$ shadowed fading model, then the Nakagami-$m$ PDF used to model the random fluctuation of the line-of-sight component degenerates to a deterministic distribution, being its PDF the Dirac delta function. Thus, this model reduces to the original $\kappa$-$\mu$ distribution in \cite{Yacoub07}. This implies that by virtue of Theorem \ref{T1}, the $\kappa$-$\mu$ shadowed distribution with integer fading parameters and sufficiently large $m$ can be used to approximate the $\kappa$-$\mu$ distribution. This is observed in Fig. \ref{fo}, where the evolution of the $\kappa$-$\mu$ shadowed distribution as $m$ grows is represented. 

\begin{figure}[t]
\centering
    \includegraphics[width=1.0\columnwidth]{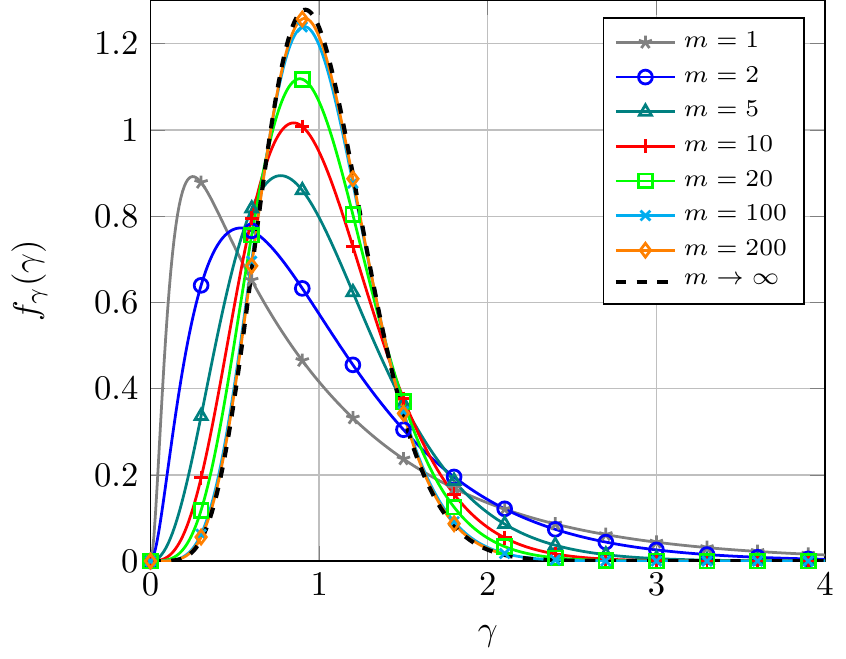}
          \caption{Convergence in distribution between the $\kappa$-$\mu$ shadowed distribution and the $\kappa$-$\mu$ distribution as $m\rightarrow\infty$. Parameter values $\kappa=5$, $\mu=3$ and $\bar\gamma=1$.}
    \label{fo}
  \end{figure}

However, a rigorous proof for the convergence in distribution between the $\kappa$-$\mu$ shadowed and the $\kappa$-$\mu$ fading models is not that evident. 
The following Lemmas formally establish the connections between the $\kappa$-$\mu$ shadowed distribution and the $\kappa$-$\mu$ distribution in terms of weak convergence of probability measures.

\begin{lemma}
\label{lema1p} Let $\{\gamma_m\}_{m=1}^{\infty}$ be a sequence of
random variables with
$\gamma_m\sim\mathcal{S}(\bar\gamma;\kappa,\mu,m)$ and the
corresponding sequence of CDFs given by
$\{F_{\mathcal{S}}(\bar\gamma;\kappa,\mu,m;\gamma)\}_{m=1}^{\infty}$. Then,
this sequence of random variables weakly converges to the
$\kappa$-$\mu$ distribution with mean $\bar\gamma$ and shaping
parameters $\kappa$ and $\mu$, i.e.
\begin{equation}
\label{eq_lema1}
\lim_{m\rightarrow \infty} F_{\mathcal{S}}(\bar\gamma;\kappa,\mu,m;\gamma) = F_{\mathcal{KM}}(\bar\gamma;\kappa,\mu;\gamma),
\end{equation}
where $F_{\mathcal{KM}}$ represents the CDF of the corresponding $\kappa$-$\mu$ distribution.
\end{lemma}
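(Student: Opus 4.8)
The plan is to prove the stronger statement that the densities converge pointwise, and then upgrade this to weak convergence (equivalently, to the CDF limit (\ref{eq_lema1})) by means of Scheffé's lemma. First I would fix $x,\kappa,\mu,\bar\gamma$ and let $m\to\infty$ in the PDF (\ref{eq:001}). The $m$-dependent part of the normalizing constant is $m^m/(m+\mu\kappa)^m=(1+\mu\kappa/m)^{-m}\to e^{-\mu\kappa}$. The remaining $m$-dependence lives in the confluent hypergeometric factor, whose argument $\tfrac{\mu^2\kappa(1+\kappa)}{\mu\kappa+m}\tfrac{x}{\bar\gamma}$ decays like $1/m$. Writing $z\triangleq\mu^2\kappa(1+\kappa)x/\bar\gamma$, this factor is exactly ${}_1F_1\!\left(m,\mu;\tfrac{z}{\mu\kappa+m}\right)$, and I would invoke the confluence relation $\lim_{a\to\infty}{}_1F_1\!\left(a,b;z/a\right)={}_0F_1(;b;z)$ to identify its limit as ${}_0F_1(;\mu;z)$.

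The second step is to recognise the resulting limit as the $\kappa$-$\mu$ density. Using $I_{\mu-1}(y)=\tfrac{(y/2)^{\mu-1}}{\Gamma(\mu)}{}_0F_1(;\mu;y^2/4)$ with $y=2\sqrt z$, the surviving factor $e^{-\mu\kappa}{}_0F_1(;\mu;z)$ combines with the left-over powers of $(1+\kappa)$, $\kappa$ and $x/\bar\gamma$; after $\Gamma(\mu)$ cancels, a short bookkeeping of exponents collapses the expression exactly to $f_{\mathcal{KM}}$ in (\ref{eq:001b}). This step is routine once the hypergeometric limit is in place.

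With pointwise convergence $f_{\mathcal S}(\bar\gamma;\kappa,\mu,m;x)\to f_{\mathcal{KM}}(x)$ established, I would close the argument via Scheffé's lemma: each $f_{\mathcal S}$ and the limit $f_{\mathcal{KM}}$ are genuine probability densities on $[0,\infty)$, so pointwise (a.e.) convergence of the densities to a density forces convergence in $L^1$, hence in total variation. The latter yields $\sup_x|F_{\mathcal S}(\bar\gamma;\kappa,\mu,m;x)-F_{\mathcal{KM}}(x)|\to 0$, which is even stronger than the pointwise CDF limit (\ref{eq_lema1}) that weak convergence requires; since $F_{\mathcal{KM}}$ is continuous, no continuity-set caveat is needed.

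I expect the main obstacle to be the rigorous justification of the confluence limit, i.e.\ interchanging $\lim_m$ with the series defining ${}_1F_1$. I would control this by dominated convergence for series: the $k$-th term equals $\tfrac{(m)_k}{(\mu)_k k!}\big(\tfrac{z}{\mu\kappa+m}\big)^k$, and since $\tfrac{(m)_k}{(m+\mu\kappa)^k}=\prod_{j=0}^{k-1}\tfrac{m+j}{m+\mu\kappa}\le\prod_{j=0}^{k-1}(1+j)=k!$ for every $m\ge 1$, each term is bounded uniformly in $m$ by $z^k/(\mu)_k$, a majorant that is summable for every $\mu>0$ because the ratio of consecutive terms, $z/(\mu+k)$, tends to $0$; the termwise limit $\tfrac{(m)_k}{(m+\mu\kappa)^k}\to 1$ then gives ${}_0F_1(;\mu;z)$. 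As an entirely independent route I would note the MGF argument: from the factorization $M_m(s)=(1-\Delta_1 s)^{-(\mu-m)}(1-\Delta_2 s)^{-m}$ recorded near (\ref{eq:A3}) and the relation $\Delta_2=\Delta_1(1+\mu\kappa/m)$ implicit in (\ref{eq:007}), one writes $M_m(s)=(1-\Delta_1 s)^{-\mu}\big(\tfrac{1-\Delta_1 s}{1-\Delta_2 s}\big)^{m}$; the bracketed factor tends to $\exp\big(\tfrac{\Delta_1\mu\kappa s}{1-\Delta_1 s}\big)$, the product is precisely the $\kappa$-$\mu$ MGF, and Lévy's continuity theorem (applied to the characteristic functions, $s=it$) yields the same conclusion.
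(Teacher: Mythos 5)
Your proposal is correct, and it actually contains two complete arguments. Your ``independent route'' at the end is essentially the paper's own proof: the paper disposes of this lemma in one sentence by invoking L\'evy's continuity theorem, leaving implicit precisely the computation you spell out, namely that the MGF factorization $M_m(s)=(1-\Delta_1 s)^{-(\mu-m)}(1-\Delta_2 s)^{-m}$ with $\Delta_2=\Delta_1(1+\mu\kappa/m)$ can be rearranged as $(1-\Delta_1 s)^{-\mu}\bigl(1-\tfrac{\Delta_1\mu\kappa s}{m(1-\Delta_1 s)}\bigr)^{-m}\to(1-\Delta_1 s)^{-\mu}\exp\bigl(\tfrac{\Delta_1\mu\kappa s}{1-\Delta_1 s}\bigr)$, which is the $\kappa$-$\mu$ MGF; in that sense your version is more complete than the paper's, which never verifies the transform limit it relies on. Your primary route, however, is genuinely different: pointwise convergence of the densities via the confluence limit ${}_1F_1(a,b;z/a)\to{}_0F_1(;b;z)$ (with a clean dominated-convergence justification using $\tfrac{(m)_k}{(m+\mu\kappa)^k}\le k!$), the Bessel identity $I_{\mu-1}(y)=\tfrac{(y/2)^{\mu-1}}{\Gamma(\mu)}{}_0F_1(;\mu;y^2/4)$ to identify the limit as $f_{\mathcal{KM}}$, and then Scheff\'e's lemma. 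This buys you strictly more than the lemma asserts: convergence in total variation, hence uniform convergence of the CDFs, rather than mere weak convergence; it also works entirely on the density level, so it makes the figure accompanying the lemma (convergence of PDFs) rigorous as a by-product, something the transform argument does not directly give. The trade-off is length: the paper's (and your second) argument is a few lines once the MGF limit is granted, whereas the Scheff\'e route requires the hypergeometric bookkeeping you carry out. Both are sound; either would serve as a valid proof of the statement.
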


\begin{proof}
It is a direct consequence of the L\'evy's continuity theorem \cite{Gray1996}.
\end{proof}

Lemma \ref{lema1p} establishes that the $\kappa$-$\mu$ shadowed distribution converges to the $\kappa$-$\mu$ distribution for sufficiently large $m$. Thus, it is possible to approximate the $\kappa$-$\mu$ distribution with integer $\mu$ by a finite mixture of squared Nakagami distributions, by using the $\kappa$-$\mu$ shadowed fading distribution with integer fading parameters and choosing an arbitrarily large parameter $m\in\mathbb{N}$.

These connections between the $\kappa$-$\mu$ shadowed distribution and the $\kappa$-$\mu$ distribution can be extended to any probability measure obtained by expectation over the desired distribution, as stated in the following Lemma.
\begin{lemma}
\label{lema2p} Let $\{\gamma_m\}_{m=1}^{\infty}$ be a sequence of
random variables with
$\gamma_m\sim\mathcal{S}(\bar\gamma;\kappa,\mu,m)$ and the
corresponding sequence of CDFs given by
$\{F_{\mathcal{S}}(\bar\gamma;\kappa,\mu,m;\gamma)\}_{m=1}^{\infty}$.
Let $\Phi(\gamma)$ be any probability measure conditioned to $\gamma$, which is
continuous and bounded as a function of $\gamma$.
Then, the sequence of the expectations of $\Phi$ over the $\kappa$-$\mu$ shadowed random variables converges to the
expectation of $\Phi$ over a $\kappa$-$\mu$ random variable with mean $\bar\gamma$ and shaping
parameters $\kappa$ and $\mu$, i.e.
\begin{equation}
\label{eq_lema1}
\lim_{m\rightarrow \infty} \int_0^{\infty}{\kern -2mm} \Phi(\gamma) d F_{\mathcal{S}}(\bar\gamma;\kappa,\mu,m;\gamma)
= \int_0^{\infty}{\kern -2mm} \Phi(\gamma) d F_{\mathcal{KM}}(\bar\gamma;\kappa,\mu;\gamma),
\end{equation}
where $F_{\mathcal{KM}}$ represents the CDF of the corresponding $\kappa$-$\mu$ distribution.
\end{lemma}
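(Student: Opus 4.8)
The plan is to recognize that the two integrals appearing in the statement are precisely the expectations $\mathbb{E}[\Phi(\gamma_m)]$ and $\mathbb{E}[\Phi(\gamma)]$, with $\gamma_m \sim \mathcal{S}(\bar\gamma;\kappa,\mu,m)$ and $\gamma \sim \mathcal{KM}(\bar\gamma;\kappa,\mu)$, and then to upgrade the weak convergence already established in Lemma \ref{lema1p} to convergence of these expectations. First I would invoke Lemma \ref{lema1p}, which gives $F_{\mathcal{S}}(\bar\gamma;\kappa,\mu,m;\cdot) \to F_{\mathcal{KM}}(\bar\gamma;\kappa,\mu;\cdot)$ weakly, i.e. pointwise at every continuity point of the limiting CDF. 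The task then reduces to passing this convergence of distribution functions through the functional $\Phi$.

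The heart of the argument is the Portmanteau theorem (equivalently, the Helly--Bray theorem): one of its standard characterizations states that a sequence of probability measures converges weakly if and only if $\int \Phi\, dF_m \to \int \Phi\, dF$ for every bounded continuous $\Phi$. Since $\Phi$ is assumed continuous and bounded on $[0,\infty)$, and since Lemma \ref{lema1p} already supplies the required weak convergence of the underlying measures, this characterization applies verbatim and delivers the claimed limit.

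The main point to verify --- rather than a genuine obstacle --- is simply that the hypotheses of the Portmanteau theorem are met on the half-line $[0,\infty)$ equipped with its usual topology; this is immediate, since the theorem holds on any metric space and all the distributions involved are supported on $[0,\infty)$. If a self-contained argument were preferred, the cleanest route would be the Skorokhod representation theorem: realize $\gamma_m$ and $\gamma$ on a common probability space so that $\gamma_m \to \gamma$ almost surely, whereupon continuity of $\Phi$ gives $\Phi(\gamma_m) \to \Phi(\gamma)$ almost surely, and boundedness of $\Phi$ permits the bounded convergence theorem to exchange limit and expectation. Either way, the result follows essentially for free from Lemma \ref{lema1p}, and the same reference \cite{Gray1996} cited for L\'evy's continuity theorem covers the probabilistic tools required here.
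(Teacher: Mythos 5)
Your proposal is correct and follows essentially the same route as the paper: the authors likewise prove this lemma by combining the weak convergence established in Lemma \ref{lema1p} with the Helly--Bray theorem, which is exactly the Portmanteau-style argument at the heart of your write-up. The Skorokhod-representation alternative you sketch is a valid self-contained variant, but it is not needed beyond what the cited theorem already provides.
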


\begin{proof}
It is a direct consequence of Lemma \ref{lema1p} and the Helly-Bray theorem \cite{Billy}.
\end{proof}
%

\subsection{A new gamma approximation to the Rician distribution}

In the milestone paper by Nakagami \cite{Nakagami1960}, a simple equivalence between the Nakagami-$\hat m$ distribution and the Rician distribution was proposed. The connection between these distributions is established by a simple parameter transformation, setting $\hat m = (1+K)^2/(1+2K)$. This approximation, also included in the reference textbook by Simon and Alouini \cite[eq. (2.26)]{AlouiniBook}, has been widely employed in wireless communications in order to approximate the Rician distribution by the more tractable Nakagami-$\hat m$ distribution. 

However, as argued by many authors \cite{Wang2003,Beaulieu2014} such approximation has a severe flaw that has important impact when analyzing the system performance: the diversity order of Rician fading equals one, whereas the diversity order of Nakagami-$\hat m$ fading is $\hat m$. The diversity order is related to the behavior of the PDF around the origin, or equivalently to the asymptotic behavior of the MGF as $s\rightarrow\infty$.

As consequences of Lemma \ref{lema1p}, and setting $\mu=1$, we propose to approximate the Rician distribution by a finite mixture of squared Nakagami distributions, by using the Rician shadowed fading distribution and choosing an arbitrarily large parameter $m\in\mathbb{N}$. The main benefit of this approximation relies on the fact that the diversity order of the Rician shadowed distribution is also $1$.

In Figs. \ref{fx1} and \ref{fx2}, the behavior of the classical approximation to the Rician distribution proposed by Nakagami, and the one here proposed based on the Rician shadowed distribution with integer $m$ is illustrated, representing the corresponding PDFs for different values of $K$ and $m$. Log-log scale was used in order to better observe the effect of increasing $m$. We can see that for low values of $K$, the approximation to the Rician distribution based on the Rician shadowed distribution is good even for moderate values of $m$. As $K$ grows, a larger number of terms is required for the mixture approximation (i.e., a larger $m$) in order to converge to the Rician distribution. In both cases, the smoothness of the PDFs in the proximity of zero for the Rician and Rician shadowed distributions have similar shape, whereas the original approximation in \cite{Nakagami1960} exhibits a very different behavior.

\begin{figure}[t]
\centering
    \includegraphics[width=1.0\columnwidth]{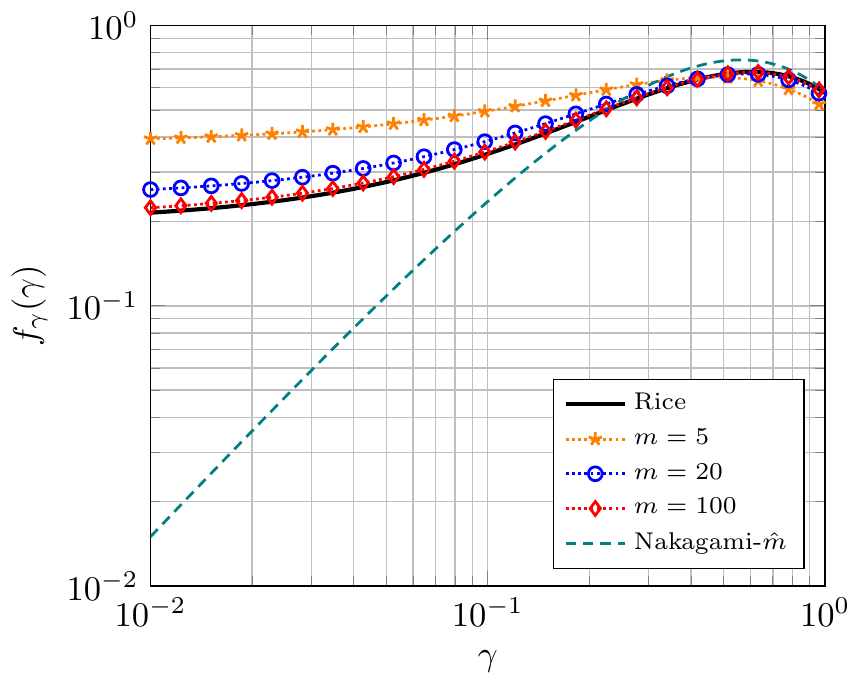}
          \caption{Gamma approximation to the Rician distribution with parameter $K$ using the Rician shadowed distribution with parameter $K$ and integer $m$. Parameter values $K=3$ and $\bar\gamma=1$. Nakagami approximation \cite{Nakagami1960} uses $\hat m = (1+K)^2/(1+2K)=2.28$.}    \label{fx2}
  \end{figure}
  
\begin{figure}[t]
\centering
    \includegraphics[width=1.0\columnwidth]{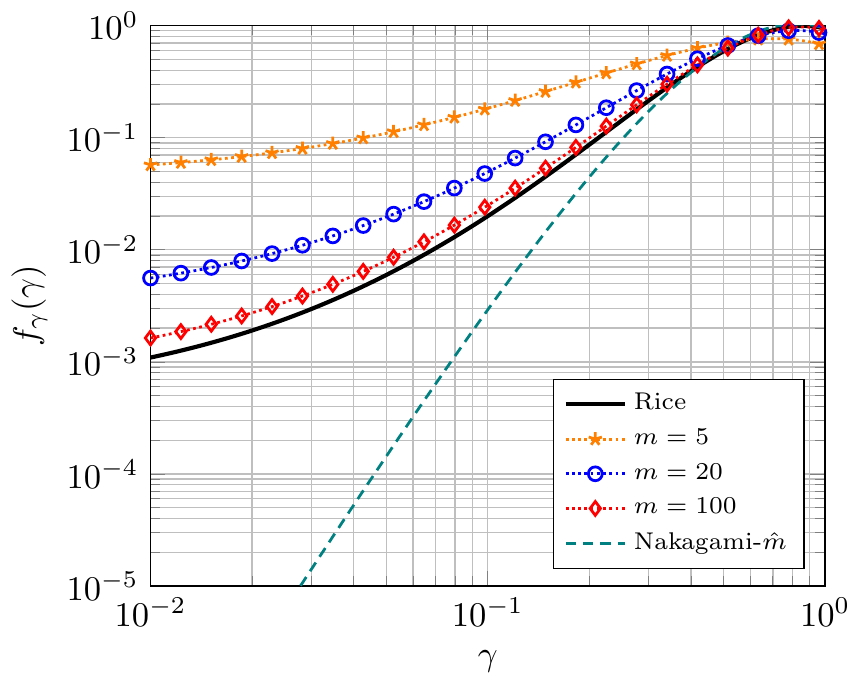}
          \caption{Gamma approximation to the Rician distribution with parameter $K$ using the Rician shadowed distribution with parameter $K$ and integer $m$. Parameter values $K=10$ and $\bar\gamma=1$. Nakagami approximation \cite{Nakagami1960} uses $\hat m = (1+K)^2/(1+2K)=5.76$.}
    \label{fx1}
  \end{figure}

\subsection{Effect of considering integer fading parameters}

The parameter $\mu\in\mathbb{N}$ was originally introduced by Yacoub \cite{Yacoub07} as the number of clusters of multipath waves that propagate in a non-homogeneous environment. As argued in \cite{Yacoub07}, the restriction of the parameter $\mu$ to take integer values is inherently linked to the underlying physical model for the $\kappa$-$\mu$ distribution. Thus, the $\eta$-$\mu$ and $\kappa$-$\mu$ fading models with integer $\mu$ are usually referred to as \textit{physical models}, and are often used to evaluate the performance of communication systems operating over generalized fading channels \cite{Morales2010,Peppas2013,Ermolova2011,Ermolova2014,Tsiftsis2016,Pena2015}. The consideration of a real-valued $\mu$ indeed yields a larger flexibility to the model; however, the impact of considering an integer $\mu$ decreases as $\mu$ grows as observed in Fig. \ref{f1}.

The restriction of the parameter $m$ to take integer values only has a non-negligible impact in heavy shadowing environments (i.e. low values of $m$). However, as $m$ grows the PDFs corresponding to the real-valued $m$ and its closest integer (i.e., largest previous or smallest following integer) counterpart tend to be indistinguishable; this is observed in Fig. \ref{f2}.

\begin{figure}[t]
\centering
    \includegraphics[width=1.0\columnwidth]{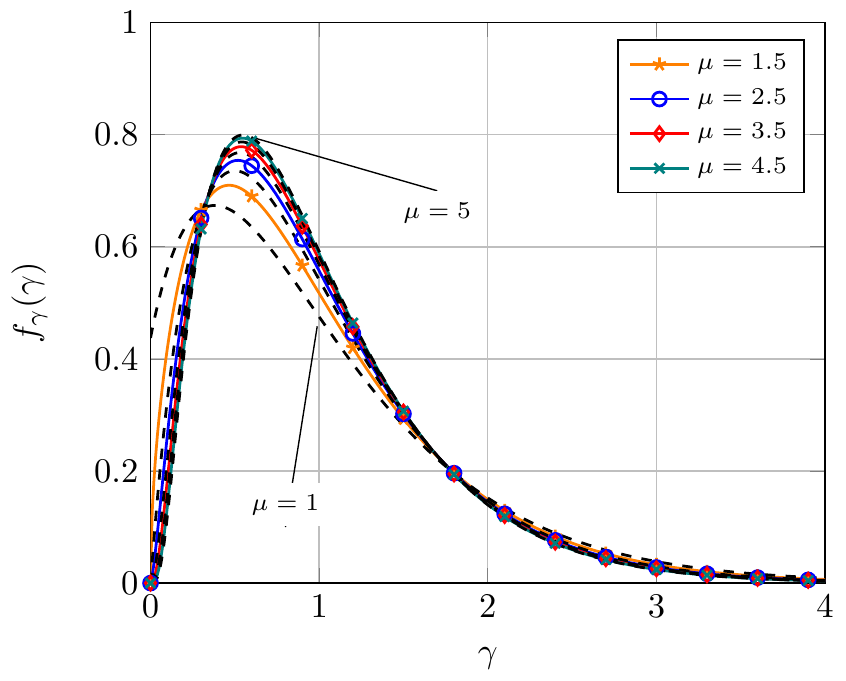}
          \caption{Evolution of the $\kappa$-$\mu$ shadowed fading PDF for different values of $\mu$. Solid lines correspond to real-valued $\mu\in\{1.5,2.5,3.5,4.5\}$, whereas dashed lines correspond to the largest previous and the smallest following integer values $\mu\in\{1:5\}$. Parameter values $\kappa=6$, $m=2$ and $\bar\gamma=1$.}
    \label{f1}
  \end{figure}

\begin{figure}[t]
\centering
    \includegraphics[width=1.0\columnwidth]{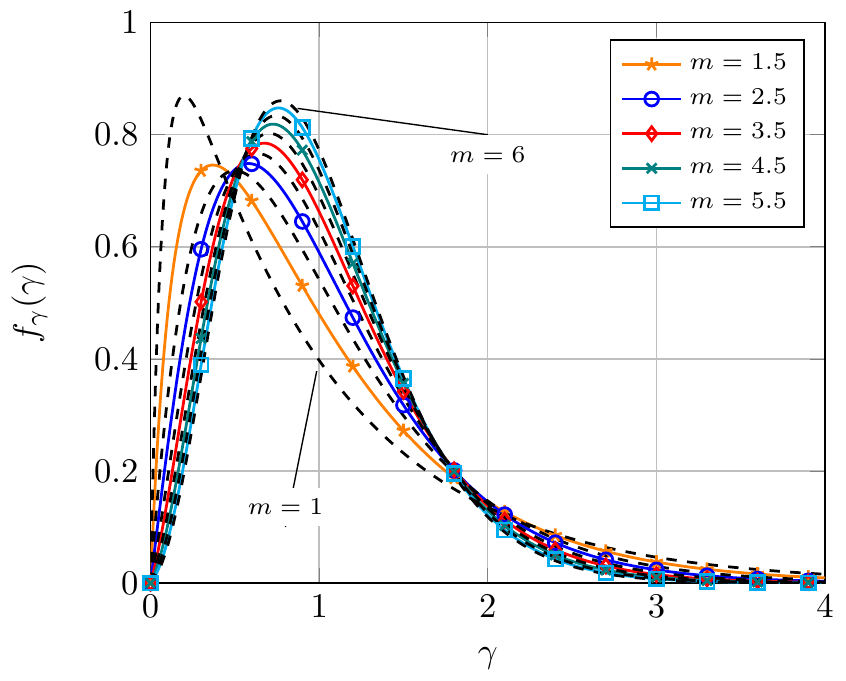}
          \caption{Evolution of the $\kappa$-$\mu$ shadowed fading PDF for different values of $m$. Solid lines correspond to real-valued $m\in\{1.5,2.5,3.5,4.5,5.5\}$, whereas dashed lines correspond to the largest previous and the smallest following integer values $m\in\{1:6\}$. Parameter values $\kappa=6$, $\mu=2$ and $\bar\gamma=1$.}
    \label{f2}
  \end{figure}

{ Thus, from the observation of Figs. \ref{f1} and \ref{f2} we see that the effect of restricting the fading severity parameters to take integer values is limited, unless $m$ or $\mu$ take low values (i.e., severe LOS fluctuation or severe multipath, respectively). In fact, one may wonder whether there's a lower value of $m$ (or equivalently $\mu$) under which the use of the $\kappa$-$\mu$ shadowed fading distribution with integer fading parameters to approximate the $\kappa$-$\mu$ shadowed fading distribution with arbitrary fading parameters is not recommended. In our view, this will strongly depend on the application. Sometimes only the tail of the distributions is needed for further calculations, or even the performance metrics of interest end up being rather similar when some fading severity parameters vary. Thus, it may occur that the approximation error is still negligible for low values of $m$ and $\mu$ in some cases; this will be later exemplified in Section IV.}

We will now study the impact of restricting the fading severity parameters $\mu$ and $m$ to take integer values on the goodness of fit to field measurements. We use the empirical results presented in \cite{Francis2016} for some underwater acoustic channels (UAC), for which the $\kappa$-$\mu$ shadowed fading model showed the best fit. { These measurements were conducted in the Mediterranean Sea near Cartagena (Spain), in shallow waters with depths in the range of 14-30m. Details on the specific measurement configuration, including a block diagram of the measurement equipment set-up, can be found in \cite[Sect. 3.1]{Francis2016}}.

We used a modified version of the Kolmogorov-Smirnov (KS) statistic in order define the error factor $\epsilon$ that is used to quantify the goodness of fit between the empirical and theoretical CDFs, which are denoted by $\hat{F}_r(\cdot)$ and ${F}_r(\cdot)$ respectively, i.e,
\begin{equation}
\epsilon\triangleq \max_{x}|\log_{10} \hat{F}_r(x)-\log_{10} F_r(x)|.
\end{equation}

As in \cite{Francis2016}, the CDF is used in log-scale with the aim of outweighing the fit in those values of received power closer to zero, i.e. those corresponding to a more severe fading. In Figs \ref{fit1} and \ref{fit2} we compare the set of measurements corresponding to the channels C9-32 and C9-64 with three different distributions: Rician, $\kappa$-$\mu$ shadowed with integer fading parameters, and $\kappa$-$\mu$ shadowed. For the C9-32 channel, we observe that the error factor increases from $\epsilon=0.026$ to $\epsilon=0.083$ when constraining $\mu$ and $m$ to take integer values. However, the fit is still improved when comparing to Rician fading, which yields $\epsilon=0.111$.

\begin{figure}[t]
\centering
    \includegraphics[width=1.0\columnwidth]{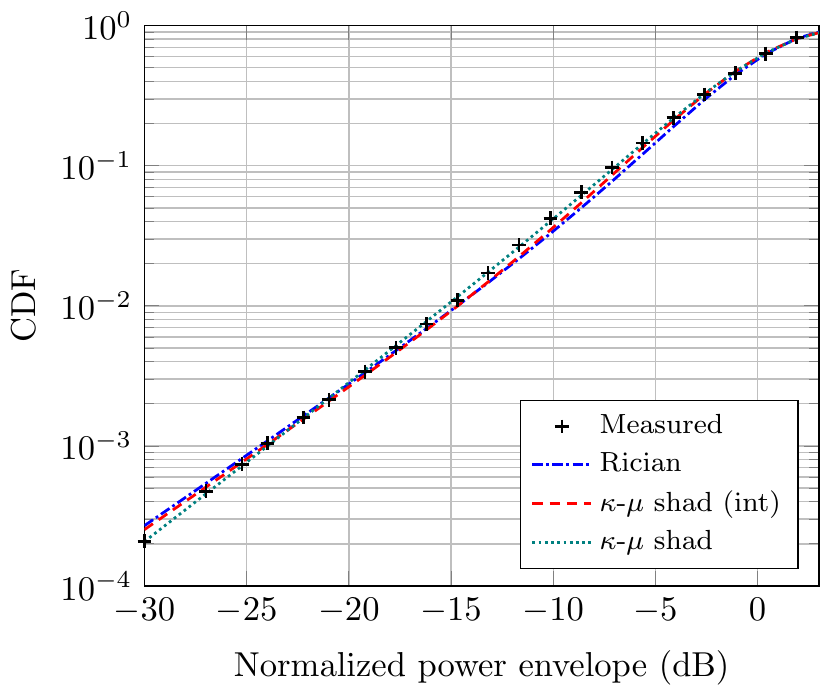}
\caption{Empirical vs theoretical CDFs of the received signal power for the UAC channel C9-32 \cite{Francis2016}. Parameter values: Rician $\{K=2.64;\epsilon=0.111\}$, $\kappa$-$\mu$ shadowed int $\{\kappa=12.84,\mu=1,m=2;\epsilon=0.083\}$; $\kappa$-$\mu$ shadowed $\{\kappa=4.06,\mu=1.13,m=2.45;\epsilon=0.026\}$.}
 \label{fit1}
 \end{figure}

 \begin{figure}[t]
 \centering
    \includegraphics[width=1.0\columnwidth]{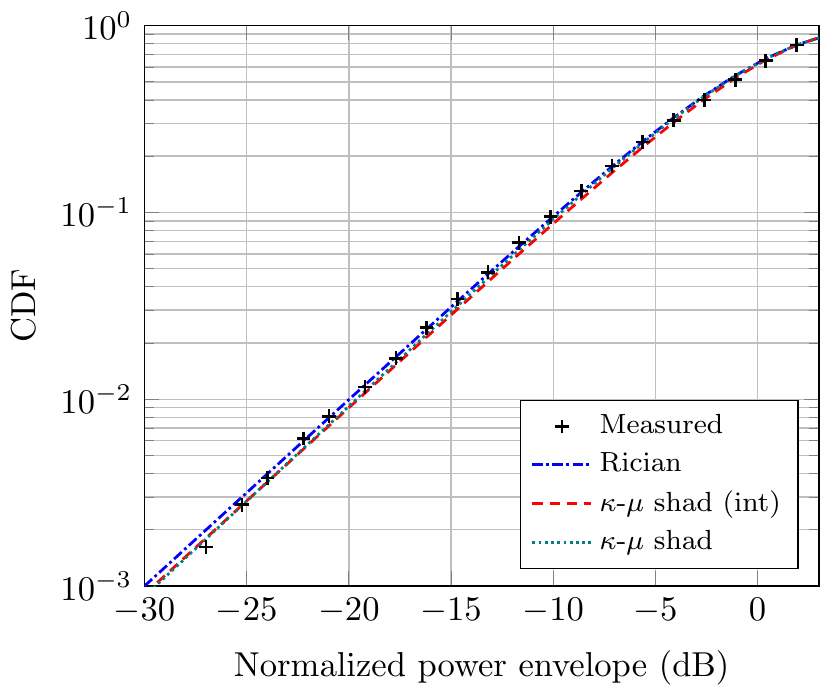}
\caption{Empirical vs theoretical CDFs of the received signal power for the UAC channel C9-64 \cite{Francis2016}. Parameter values: Rician $\{K=0.03;\epsilon=0.104\}$, $\kappa$-$\mu$ shadowed int $\{\kappa=0.6,\mu=1,m=6;\epsilon=0.060\}$; $\kappa$-$\mu$ shadowed $\{\kappa=0.03,\mu=1.02,m=6.32;\epsilon=0.053\}$.}
\label{fit2} 
\end{figure}

With regard to the C9-64 channel, we now observe that the error factor is barely increased (i.e. $\epsilon=0.060$ instead of $\epsilon=0.053$) when forcing $\mu$ and $m$ to take integer values, compared to the general $\kappa$-$\mu$ shadowed fading. For the case of Rician fading, we have $\epsilon=0.104$, which is clearly outperformed by the $\kappa$-$\mu$ shadowed model with integer fading parameters.

%

\subsection{Performance analysis}
The calculation of performance metrics of interest such as error probability or channel capacity requires to integrate over the PDF or CDF of the SNR. Using the  general representation of the $\kappa$-$\mu$ shadowed fading distribution implies integrating over the hypergeometric functions that define its PDF and CDF. However, with the new representation for the $\kappa$-$\mu$ shadowed distribution functions, the performance analysis is greatly simplified. In the following Lemma, we show that analyzing the performance in $\kappa$-$\mu$ shadowed fading has the same complexity as analyzing the much simpler Nakagami-$\hat m$ case.

\begin{lemma} \label{C2}
Let $h(\gamma)$ be a performance metric depending on the instantaneous SNR $\gamma$, and let $\overline{h}_  \K(\overline{\gamma};m)$ be the metric in Nakagami fading with average SNR $\overline{\gamma}$ obtained by averaging over an interval of the PDF of the SNR, i.e.,
\begin{equation} \label{eq:008}
	 \overline{h}_\K(\overline{\gamma};m)= \int_a^b 
	 h(x) f_\K \left( {\bar \gamma ;m;x} \right)	 dx,
\end{equation}
with $0 \leq a < b \leq \infty$.
Then, the average  performance metric in $\kappa$-$\mu$ shadowed fading channels with average SNR $\overline{\gamma}$, denoted as $\overline{h}_\mathcal{S}(\bar \gamma ;\kappa ,\mu ,m)$, can be calculated, given that $\mu,m \in {\mathbb N}$, as
{
\begin{equation} \label{eq:009}
\begin{split}
  \overline h _S \left( {\bar \gamma ;\kappa ,\mu ,m} \right) = \sum\limits_{i = 0}^{M} {C_i } \overline h _K \left( {\Omega_i/m_i;m_i} \right),
\end{split}
\end{equation}
}
\end{lemma}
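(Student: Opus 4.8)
The plan is to reduce the claim to the finite-mixture representation of the $\kappa$-$\mu$ shadowed PDF established in Theorem \ref{T1}, exploiting the fact that averaging a fixed metric against a finite mixture decouples, term by term, into averages against the individual squared Nakagami components. The genuine mathematical content lives entirely in that mixture identity; everything else is linearity of the integral and careful bookkeeping of the parameter map.

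First I would write out the defining integral $\overline{h}_\mathcal{S}(\bar\gamma;\kappa,\mu,m)=\int_a^b h(x)\,f_S(\bar\gamma;\kappa,\mu,m;x)\,dx$ and substitute the unified single-sum form of the PDF in (\ref{eqPDF}), namely $f_S=\sum_{i=0}^{M}C_i\,\frac{x^{m_i-1}}{(m_i-1)!}\,\Omega_i^{-m_i}e^{-x/\Omega_i}$, whose coefficients $C_i$, orders $m_i$, and scales $\Omega_i$ are tabulated in Table \ref{table02}. Because the index set is finite, I can interchange summation and integration with no convergence subtleties, obtaining $\overline{h}_\mathcal{S}=\sum_{i=0}^{M}C_i\int_a^b h(x)\,\frac{x^{m_i-1}}{(m_i-1)!}\,\Omega_i^{-m_i}e^{-x/\Omega_i}\,dx$ over the common limits $a$ and $b$.

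Next I would identify each integrand as $h(x)$ weighted by a squared Nakagami density. Comparing the generic factor $\frac{x^{m_i-1}}{(m_i-1)!}\Omega_i^{-m_i}e^{-x/\Omega_i}$ with the definition in (\ref{eq:002}), $f_\K(\bar\gamma;\hat m;x)=(\hat m/\bar\gamma)^{\hat m}x^{\hat m-1}/(\hat m-1)!\,e^{-x\hat m/\bar\gamma}$, the two coincide precisely when the order equals $\hat m=m_i$ and the exponential rate matches, i.e. $\hat m/\bar\gamma=1/\Omega_i$, which forces the component's average SNR to be $\bar\gamma_i=m_i\Omega_i$; this single choice simultaneously reproduces the prefactor $(\hat m/\bar\gamma)^{\hat m}=\Omega_i^{-m_i}$. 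Hence the $i$-th integral is, by the very definition (\ref{eq:008}), nothing but $\overline{h}_\K$ evaluated at that average SNR and at order $m_i$, over the same interval $[a,b]$. Collecting terms yields $\overline{h}_\mathcal{S}(\bar\gamma;\kappa,\mu,m)=\sum_{i=0}^{M}C_i\,\overline{h}_\K(\,\cdot\;;m_i)$, which is the asserted identity (\ref{eq:009}); since $h$ and the limits $a,b$ are carried along untouched, the result holds for \emph{every} admissible metric at once.

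I do not anticipate a real obstacle beyond reading the parameter map of Table \ref{table02} correctly. The one place that warrants care is keeping the distinction between the \emph{scale} $\Omega_i$ appearing in the exponential rate and the \emph{mean} $m_i\Omega_i$ that serves as the first (average-SNR) argument of $\overline{h}_\K$: the rate-matching condition $\hat m/\bar\gamma=1/\Omega_i$ must be solved for $\bar\gamma$ before substituting into the Nakagami metric. Once this identification is fixed consistently with the entries of Table \ref{table02}, the Lemma follows immediately from the exact mixture decomposition of Theorem \ref{T1} and the linearity of integration.
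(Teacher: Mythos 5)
Your proof is correct and follows essentially the same route as the paper's own proof: substitute the unified mixture representation (\ref{eqPDF}) into the defining integral, exchange the finite sum with the integral, and recognize each term as a squared Nakagami average in the sense of definition (\ref{eq:008}). One point worth highlighting: your identification of the $i$-th component's average SNR as $m_i\Omega_i$ (obtained by solving the rate-matching condition $\hat m/\bar\gamma = 1/\Omega_i$ against definition (\ref{eq:002})) is the correct one, whereas the argument $\Omega_i/m_i$ printed in (\ref{eq:009}) and in the paper's proof step (\ref{eq:0C2}) is a typo --- with definition (\ref{eq:002}), $f_\K\left(\Omega_i/m_i; m_i; x\right)$ would have exponential rate $m_i^2/\Omega_i$ rather than the required $1/\Omega_i$. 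The capacity application (\ref{eq:013}) confirms your version: each of its terms is the Nakagami capacity evaluated at Gamma scale $\Omega_i$, i.e., at average SNR $m_i\Omega_i$, exactly as your parameter map prescribes.
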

\begin{proof}
{
The average performance metric $\overline{h}_\mathcal{S}(\bar \gamma ;\kappa ,\mu ,m)$ is calculated as
\begin{equation} \label{eq:0C1}
	 \overline{h}_\mathcal{S}(\bar \gamma ;\kappa ,\mu ,m)= \int_a^b 
	 h(x) f_\mathcal{S}\left( {\bar \gamma ;\kappa ,\mu ,m;x} \right)  dx.
\end{equation}
We can express $f_\mathcal{S}\left( {\bar \gamma ;\kappa ,\mu ,m;x} \right)$ as a mixture of squared Nakagami distributions, given in (\ref{eqPDF}). Thus, combining (\ref{eq:002}) with (\ref{eqPDF}) we have
\begin{equation}
\label{eq:0C2}
\begin{split}
  & f_\mathcal{S} \left( {\bar \gamma ;\kappa ,\mu ,m;x} \right) = \sum\limits_{i = 0}^{M} {C_i} f_K \left( {\Omega_i/m_i ;m_i;x} \right).
\end{split}
\end{equation}
Introducing (\ref{eq:0C2}) into (\ref{eq:0C1}), after some simple manipulation, (\ref{eq:009}) is obtained.
}
\end{proof}

The implications of this lemma are of great importance, as it means that \textit{any performance metric} for which existing results are available for the Nakagami-$\hat m$ case can be directly generalized to the $\kappa$-$\mu$ shadowed case by means of a finite linear combination as described in (\ref{eq:008}). We will exemplify the usefulness of this Lemma in the next Section.

\section{Application: Average channel capacity}
\label{apps}
%
The characterization of the average channel capacity in fading channels, defined as 
\begin{equation}
\bar{C}[\text{bps/Hz}]\triangleq\int_0^{+\infty}\log_2(1+\gamma)f_\gamma(\gamma)d\gamma,
\end{equation}
where $\gamma$ is the instantaneous SNR at the receiver side, is a classical problem in communication theory \cite{Lee1990,Gunther1996,Alouini1999}. The average channel capacity assuming $\kappa$-$\mu$ shadowed fading was derived in \cite{Celia2014}, in terms of the unwieldy bivariate Meijer-$G$ function. 

However, a direct application of Lemma \ref{C2} using the average channel capacity expression for Nakagami-$\hat m$ fading channels \cite[eq. (23)]{alouini1997} yields the following simple closed-form expression:
{
\begin{equation} \label{eq:013}
\begin{split}
  & \overline C  = \log _2 \left( e \right) \sum\limits_{i = 0}^{M} {C_i }e^{1/\Omega_i } \sum\limits_{k = 0}^{m_i- 1} {\frac{{\Gamma \left( { - k,1/\Omega_i } \right)}}
{{\Omega_i^k }}} , \cr
\end{split}
\end{equation}
}

where $\Gamma(\cdot)$ is the upper incomplete Gamma function, which can be computed, when the first parameter is a negative integer, as \cite[eq. (8.352.3)]{Gradstein2007}
\begin{equation} \label{eq:011}
\begin{split}
\Gamma ( - n,x) = \frac{{\left( { - 1} \right)^n }}
{{n!}}\left[ { - Ei\left( { - x} \right) - e^{ - x} \sum\limits_{r = 0}^{n - 1} {\left( { - 1} \right)^r \frac{{r!}}
{{x^{r + 1} }}^r } } \right],
\end{split}
\end{equation}
and 
\begin{equation}
\label{eq:012}
\Gamma (0,x) =  - Ei\left( { - x} \right),
\end{equation}
where $Ei(\cdot)$ is the exponential integral function \cite[eq. (8.211.1)]{Gradstein2007}.

In the next set of figures, the effect of the fading severity parameters $\mu$ and $m$ on the average capacity is investigated. Firstly, Figs. \ref{f5} and \ref{f6} illustrate the effect of $m$ in different conditions: strong LOS ($\kappa=10$) and weak LOS ($\kappa=1$), assuming a fixed value of $\mu=3$. In general terms, a larger $m$ is translated into a larger capacity for a given SNR. However, the effect of $m$ is much more pronounced in the strong LOS scenario, leading to a more severe performance degradation for lower $m$ (i.e. heavy shadowing in the LOS component.). Conversely, in the weak LOS scenario the effect of $m$ is barely noticeable.

%

\begin{figure}[t]
\centering
    \includegraphics[width=1.0\columnwidth]{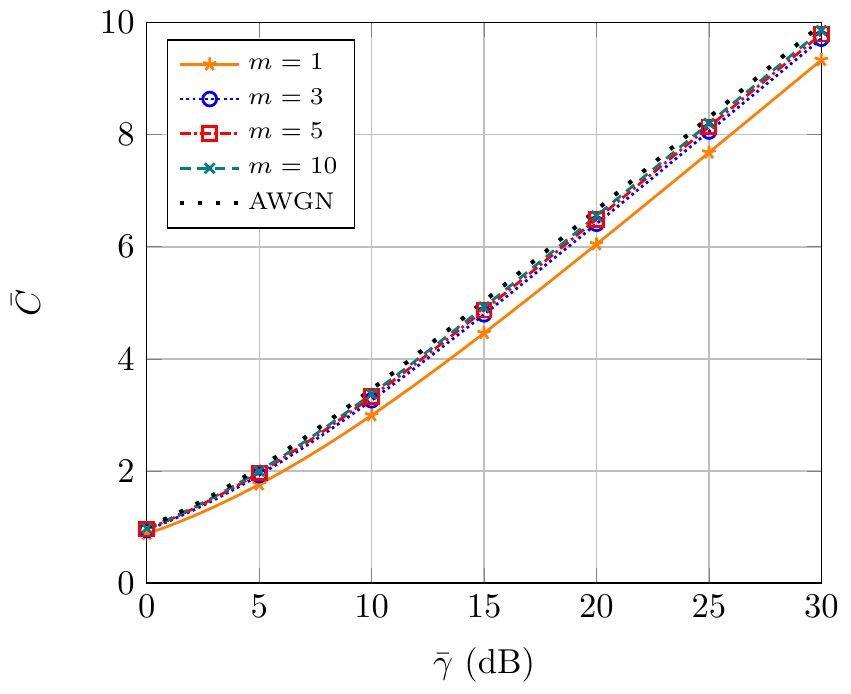}
          \caption{Average channel capacity vs. average SNR, for different values of $m$. Parameter values are $\kappa=10$ and $\mu=3$. The AWGN case is included as a reference.}
    \label{f5}
  \end{figure}

\begin{figure}[t]
\centering
    \includegraphics[width=1.0\columnwidth]{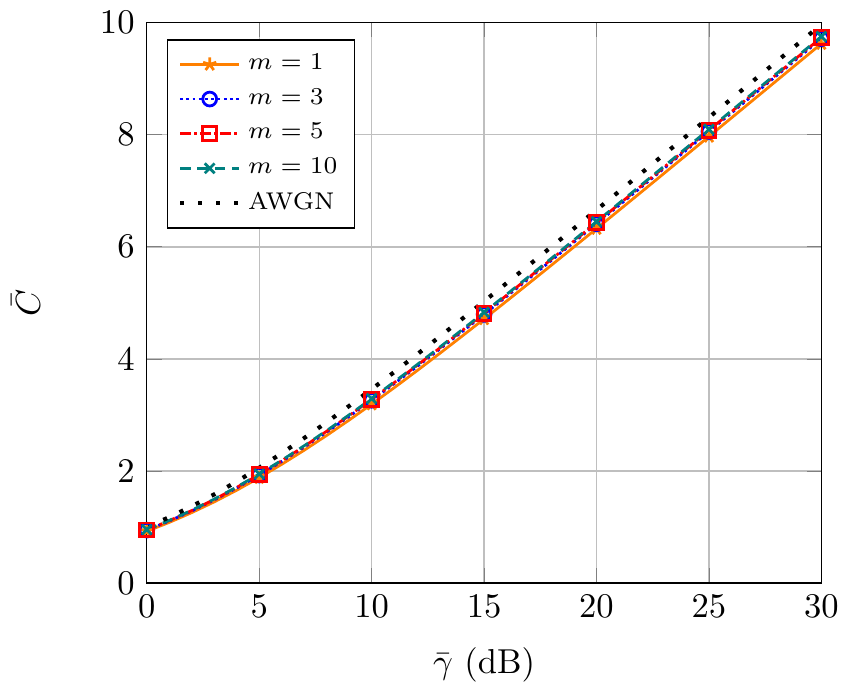}
          \caption{Average channel capacity vs. average SNR, for different values of $m$. Parameter values are $\kappa=1$ and $\mu=3$. The AWGN case is included as a reference.}
    \label{f6}
  \end{figure}

In Figs. \ref{f7} and \ref{f8}, the effect of $\mu$ is investigated in the same conditions as in the previous case: strong LOS ($\kappa=10$) and weak LOS ($\kappa=1$), assuming a fixed value of $m=3$. We see that in the strong LOS scenario, increasing $\mu$ has little effect as the performance is dominated by the LOS component. Conversely, increasing the number of clusters $\mu$ in the weak LOS scenario is translated into a better performance.

\begin{figure}[t]
\centering
    \includegraphics[width=1.0\columnwidth]{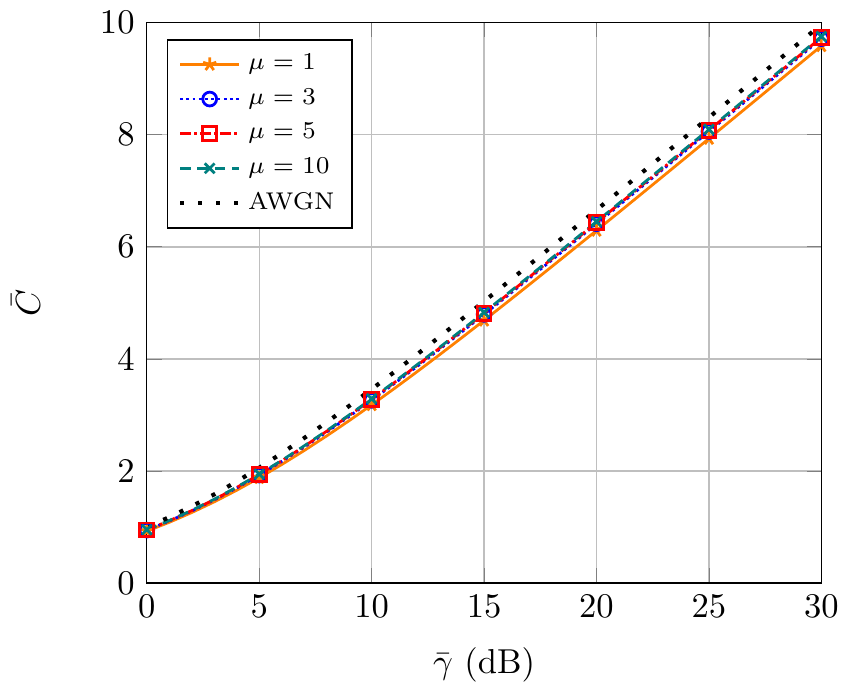}
          \caption{Average channel capacity vs. average SNR, for different values of $\mu$. Parameter values are $\kappa=10$ and $m=3$. The AWGN case is included as a reference.}
    \label{f7}
  \end{figure}

\begin{figure}[t]
\centering
    \includegraphics[width=1.0\columnwidth]{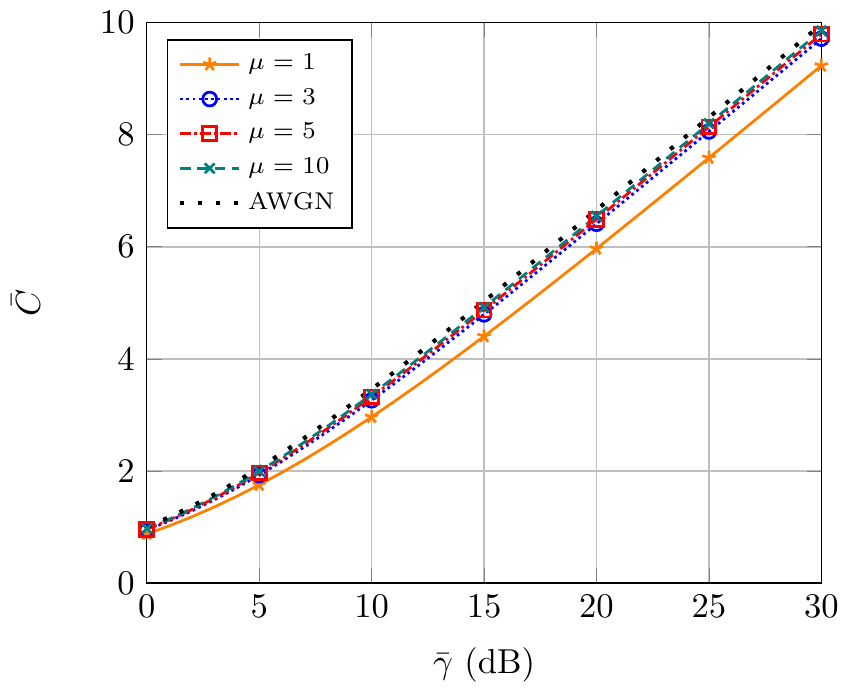}
          \caption{Average channel capacity vs. average SNR, for different values of $\mu$. Parameter values are $\kappa=1$ and $m=3$. The AWGN case is included as a reference.}
    \label{f8}
  \end{figure}

\section{Conclusions}
\label{conc}

The statistical characterization of the $\kappa$-$\mu$ shadowed model with integer fading parameters was here presented. Remarkably, the PDF and CDF of this very general model can be expressed in closed-form in terms of a finite sum of powers and exponentials, in the form of a mixture of Gamma distributions. The inherent mathematical complexity of the $\kappa$-$\mu$ shadowed fading model is greatly reduced at the expense of a limited restriction in terms of flexibility. Most notably, the performance evaluation of communication systems operating under this fading channel model can be directly evaluated if existing results are available for the simpler Nakagami-$\hat m$ fading model. Thus, considering the $\kappa$-$\mu$ shadowed fading model implies no additional complexity while allowing for a much larger flexibility in terms of propagation conditions, naturally including both LOS and NLOS scenarios.

\section{Acknowledgement}

This work has been funded by the Consejer\'ia de Econom\'ia, Innovaci\'on, Ciencia y Empleo of the Junta de Andaluc\'ia, the Spanish Government
and the European Fund for Regional Development FEDER (projects P2011-TIC-7109, P2011-TIC-8238, TEC2013-42711-R, TEC2013-44442-P and TEC2014-57901-R). The first author would like to thank Marco di Renzo for enlightening discussion about the diversity order of the approximations to the Rician distribution.

\appendices
\section{Proof of Theorem I}
\label{App1}
The MGF of  a $\kappa$-$\mu$ shadowed random variable is given by \cite{Paris2014}
\begin{equation}
\label{eq:A1}
\begin{split}
\M\left( s \right) = \frac{{\left( { - \mu } \right)^\mu  m^m \left( {1 + \kappa } \right)^\mu  }}
{{\bar \gamma ^\mu  \left( {\mu \kappa  + m} \right)^m }}\frac{{\left( {s - \frac{{\mu \left( {1 + \kappa } \right)}}
{{\bar \gamma }}} \right)^{m - \mu } }}
{{\left( {s - \frac{{\mu \left( {1 + \kappa } \right)}}
{{\bar \gamma }}\frac{m}
{{\mu \kappa  + m}}} \right)^m }},
\end{split}
\end{equation}
which can be written, in terms of $\Delta_1$ and $\Delta_2$ defined in (\ref{eq:007}), as
\begin{equation}
\label{eq:A2}
\M\left( s \right) = \frac{{\left( {1 - \Delta_1 s} \right)^{m - \mu } }}
{{\left( {1 - \Delta_2 s} \right)^m }}.
\end{equation}
The PDF is related to the MGF by the inverse Laplace transform $ \L^{-1}\{\M(-s);s;x\}$, therefore, it is clear that its analytical expression will depend on the exponent of the numerator being positive (or zero), i.e. $m \geq \mu$, or negative, i.e. $m < \mu$.

Let us consider $m < \mu$. The MGF can be rewritten in a more convenient way for this case as
\begin{equation}
\label{eq:A3}
\M\left( s \right) = \frac{1}
{{\left( {1 - \Delta_1 s} \right)^{\mu  - m} }}\frac{1}
{{\left( {1 - \Delta_2 s} \right)^m }},
\end{equation}
and performing a partial fraction expansion we obtain
\begin{equation}
\label{eq:A4}
\M\left( s \right) = \sum\limits_{j = 1}^{\mu  - m} {\frac{{A_{1j} }}
{{\left( {1 - \Delta_1 s} \right)^{\mu  - m - j + 1} }}}  + \sum\limits_{j = 1}^m {\frac{{A_{2j} }}
{{\left( {1 - \Delta_2 s} \right)^{m - j + 1} }}} ,
\end{equation}
where
\begin{equation}
\label{eq:A5}
\begin{split}
  & A_{1j}  =  \left( { - 1} \right)^{j - 1} \left( {\begin{array}{c}
   {m + j - 2}  \\ 
   {j - 1}  \\ 
 \end{array} } \right) 
\frac{{\Delta_2^{j - 1} \Delta_1^m }}
{{\left( {\Delta_1  - \Delta_2 } \right)^{m + j - 1} }},  \cr 
  & A_{2j}  = \left( { - 1} \right)^{j - 1} \left( {\begin{array}{c}
   {\mu  - m + j - 2}  \\ 
   {j - 1}  \\ 
 \end{array} } \right) 
\frac{{\Delta_1^{j - 1} \Delta_2^{\mu  - m} }}
{{\left( {\Delta_2  - \Delta_1 } \right)^{\mu  - m + j - 1} }}.  \cr 
\end{split}
\end{equation}
By plugging the defined $\Delta_1$ and $\Delta_2$ given in (\ref{eq:007}) into (\ref{eq:A5}), the expressions of coefficients $A_{1j}$ and $A_{2j}$ given in  (\ref{eq:005}) are obtained after some algebraic manipulation.
Performing now an inverse Laplace transformation to the MGF as given in (\ref{eq:A4}),  (\ref{eq:003}) is obtained.

Let us now consider $m \geq \mu$ and let us rewrite the MGF expression given in (\ref{eq:A2}) as 
\begin{equation}
\label{eq:A6}
\begin{split}
\M\left( s \right) = \frac{{\Delta_1^{m - \mu } }}
{{\Delta_2^m }}\frac{{\left( {\frac{1}
{{\Delta_1 }} - s} \right)^{m - \mu } }}
{{\left( {\frac{1}
{{\Delta_2 }} - s} \right)^m }}.
\end{split}
\end{equation}
The PDF can be obtained from the MGF given in (\ref{eq:A6}) and by employing the derivative and the modulation properties of the Laplace transformation as follows:
\begin{equation}
\label{eq:A7}
\begin{split}
  & f_\mathcal{S} \left( {\bar \gamma ;\kappa ,\mu ,m;x} \right) = \L^{ - 1} \left\{ {\M\left( { - s} \right);s.x} \right\}  \cr 
  & \quad  = \frac{{\Delta_1^{m - \mu } }}
{{\Delta_2^m }}\L^{ - 1} \left\{ {\frac{{\left( {\frac{1}
{{\Delta_1 }} + s} \right)^{m - \mu } }}
{{\left( {\frac{1}
{{\Delta_2 }} + s} \right)^m }};s.x} \right\}  \cr 
  & \quad  = e^{ - x/\Delta_1 } \frac{{\Delta_1^{m - \mu } }}
{{\Delta_2^m }}\L^{ - 1} \left\{ {\frac{{s^{m - \mu } }}
{{\left( {\frac{1}
{{\Delta_2 }} - \frac{1}
{{\Delta_1 }} + s} \right)^m }};s.x} \right\}  \cr 
  & \quad  = e^{ - x/\Delta_1 } \frac{{\Delta_1^{m - \mu } }}
{{\Delta_2^m }}\Delta_3^m \frac{{d^{m - \mu } }}
{{dx^{m - \mu } }}\L^{ - 1} \left\{ {\frac{1}
{{\left( {1 - \Delta_3 s} \right)^m }};s.x} \right\} \cr
  & \quad  = e^{ - x/\Delta_1 } \frac{{\Delta_1^{m - \mu } }}
{{\Delta_2^m \left( {m - 1} \right)!}}\frac{{d^{m - \mu } }}
{{dx^{m - \mu } }}x^{m - 1} e^{x/\Delta_3 } , \cr
\end{split}
\end{equation}
where we have defined
\begin{equation}
\label{eq:A8}
\begin{split}
\Delta_3  \triangleq \frac{{\Delta_1 \Delta_2  }}
{{\Delta_2  - \Delta_1 }}.
\end{split}
\end{equation}
From the Leibniz derivative rule we can write
\begin{equation}
\label{eq:A9}
\begin{split}
  & f_\mathcal{S} \left( {\bar \gamma ;\kappa ,\mu ,m;x} \right) = e^{ - x/\Delta_1 } \frac{{\Delta_1^{m - \mu } }}
{{\Delta_2^m \left( {m - 1} \right)!}}\sum\limits_{r = 0}^{m - \mu } {\left( {\begin{array}{c}
   {m - \mu }  \\ 
   r  \\ 
 \end{array} } \right)}   \cr 
   & \quad \quad \quad  \times \left( {\frac{{d^r }}
{{dx^r }}x^{m - 1} } \right)\left( {\frac{{d^{m - \mu  - r} }}
{{dx^{m - \mu  - r} }}e^{x/\Delta_3 } } \right)  \cr 
  & \quad \quad \quad  = e^{ - x/\Delta_1 } \frac{{\Delta_1^{m - \mu } }}
{{\Delta_2^m \left( {m - 1} \right)!}}\sum\limits_{t = 0}^{m - \mu } {\left( {\begin{array}{c}
   {m - \mu }  \\ 
   r  \\ 
 \end{array} } \right)}   \cr 
  & \quad \quad \quad  \times \frac{{\left( {m - 1} \right)!}}
{{\left( {m - 1 - r} \right)!}}x^{m - 1 - r} \left( {\frac{1}
{{\Delta_3 ^{m - \mu  - r} }}e^{x/\Delta_3 } } \right). \cr
\end{split}
\end{equation}
where we have considered that $r \leq m-1$ to calculate the $r^{th}$ order derivative of $x^{m-1}$, thus
\begin{equation}
\label{eq:A10}
\begin{split}
  & f_\mathcal{S} \left( {\bar \gamma ;\kappa ,\mu ,m;x} \right)  = \sum\limits_{t = 0}^{m - \mu } {\left( {\begin{array}{c}
   {m - \mu }  \\ 
   r  \\ 
 \end{array} } \right)\frac{{\Delta_1^{m - \mu } }}
{{\Delta_2^r \Delta_3^{m - \mu  - r} }}}   \cr 
  & \quad \quad \quad  \times \frac{1}
{{\left( {m - 1 - r} \right)!\Delta_2^{m - r} }}x^{m - 1 - r} e^{ - x/\Delta_2 } . \cr
\end{split}
\end{equation}
From the definitions of $\Delta_1$, $\Delta_2$ and $\Delta_3$ given in (\ref{eq:007}) and (\ref{eq:A8}), and noting that 
\begin{equation}
\label{eq:A11}
\begin{split}
  B_j= {\left( {\begin{array}{c}
   {m - \mu }  \\ 
   r  \\ 
 \end{array} } \right)\frac{{\Delta_1^{m - \mu } }}
{{\Delta_2^r \Delta_3^{m - \mu  - r} }}},
\end{split}
\end{equation}

after some algebraic manipulation, (\ref{eq:004}) is finally obtained.

\bibliographystyle{IEEEtran}
\bibliography{bibfile}

\begin{thebibliography}{10}
\providecommand{\url}[1]{#1}
\csname url@samestyle\endcsname
\providecommand{\newblock}{\relax}
\providecommand{\bibinfo}[2]{#2}
\providecommand{\BIBentrySTDinterwordspacing}{\spaceskip=0pt\relax}
\providecommand{\BIBentryALTinterwordstretchfactor}{4}
\providecommand{\BIBentryALTinterwordspacing}{\spaceskip=\fontdimen2\font plus
\BIBentryALTinterwordstretchfactor\fontdimen3\font minus
  \fontdimen4\font\relax}
\providecommand{\BIBforeignlanguage}[2]{{%
\expandafter\ifx\csname l@#1\endcsname\relax
\typeout{** WARNING: IEEEtran.bst: No hyphenation pattern has been}%
\typeout{** loaded for the language `#1'. Using the pattern for}%
\typeout{** the default language instead.}%
\else
\language=\csname l@#1\endcsname
\fi
#2}}
\providecommand{\BIBdecl}{\relax}
\BIBdecl

\bibitem{Paris2014}
J.~F. Paris, ``{Statistical Characterization of $\kappa$-$\mu$ Shadowed
  Fading},'' \emph{IEEE Trans. Veh. Technol.}, vol.~63, no.~2, pp. 518--526,
  Feb 2014.

\bibitem{Cotton2015}
S.~L. Cotton, ``{Human Body Shadowing in Cellular Device-to-Device
  Communications: Channel Modeling Using the Shadowed $\kappa$-$\mu$ Fading
  Model},'' \emph{IEEE J. Sel. Areas Commun.}, vol.~33, no.~1, pp. 111--119,
  Jan 2015.

\bibitem{Yacoub07}
M.~Yacoub, ``The $\kappa$-$\mu$ distribution and the $\eta$-$\mu$
  distribution,'' \emph{IEEE Antennas Propag. Mag.}, vol.~49, no.~1, pp.
  68--81, Feb 2007.

\bibitem{Yoo2015}
S.~K. Yoo, S.~L. Cotton, P.~C. Sofotasios, M.~Matthaiou, M.~Valkama, and G.~K.
  Karagiannidis, ``The $\kappa$-$\mu$/inverse gamma fading model,'' in
  \emph{2015 IEEE 26th Annual International Symposium on Personal, Indoor, and
  Mobile Radio Communications (PIMRC)}, Aug 2015, pp. 425--429.

\bibitem{Sofotasios2013}
P.~C. Sofotasios, T.~A. Tsiftsis, K.~H. Van, S.~Freear, L.~R. Wilhelmsson, and
  M.~Valkama, ``{The $\kappa$-$\mu$/Ig Composite Statistical Distribution in RF
  and FSO Wireless Channels},'' in \emph{2013 IEEE 78th Vehicular Technology
  Conference (VTC Fall)}, Sept 2013, pp. 1--5.

\bibitem{Yoo2016}
S.~K. Yoo, S.~L. Cotton, P.~C. Sofotasios, and S.~Freear, ``{Shadowed Fading in
  Indoor Off-Body Communication Channels: A Statistical Characterization Using
  the $\kappa$-$\mu$/Gamma Composite Fading Model},'' \emph{IEEE Trans.
  Wireless Commun.}, vol.~15, no.~8, pp. 5231--5244, Aug 2016.

\bibitem{Laureano2016}
L.~Moreno-Pozas, F.~J. Lopez-Martinez, J.~F. Paris, and E.~Martos-Naya, ``The
  $\kappa$-$\mu$ shadowed fading model: Unifying the $\kappa$-$\mu $ and $\eta
  $-$\mu$ distributions,'' \emph{IEEE Trans. Veh. Technol.}, vol.~65, no.~12,
  pp. 9630--9641, Dec 2016.

\bibitem{Francis2016}
F.~J. Ca{\~n}ete, J.~L{\'o}pez-Fern{\'a}ndez, C.~Garc{\'\i}a-Corrales,
  A.~S{\'a}nchez, E.~Robles, F.~J. Rodrigo, and J.~F. Paris, ``{Measurement and
  Modeling of Narrowband Channels for Ultrasonic Underwater Communications},''
  \emph{Sensors}, vol.~16, no.~2, p. 256, 2016.

\bibitem{Durgin2002}
G.~D. Durgin, T.~S. Rappaport, and D.~A. de~Wolf, ``New analytical models and
  probability density functions for fading in wireless communications,''
  \emph{IEEE Trans. Comm.}, vol.~50, no.~6, pp. 1005--1015, June 2002.

\bibitem{Salo2007}
J.~Salo, H.~M. El-Sallabi, and P.~Vainikainen, ``Statistical analysis of the
  multiple scattering radio channel,'' \emph{IEEE Trans. Antennas Propag.},
  vol.~54, no.~11, pp. 3114--3124, Nov 2006.

\bibitem{Rao2015}
M.~Rao, F.~J. Lopez-Martinez, M.-S. Alouini, and A.~Goldsmith, ``{MGF Approach
  to the Analysis of Generalized Two-Ray Fading Models},'' \emph{IEEE Trans.
  Wireless Commun.}, vol.~14, no.~5, pp. 2548--2561, May 2015.

\bibitem{Gradstein2007}
\BIBentryALTinterwordspacing
I.~S. Gradshteyn and I.~M. Ryzhik, \emph{Table of Integrals, Series and
  Products}, 7th~ed.\hskip 1em plus 0.5em minus 0.4em\relax Academic Press Inc,
  2007. [Online]. Available: \url{http://www.worldcat.org/isbn/012294755X}
\BIBentrySTDinterwordspacing

\bibitem{Kumar2015}
S.~Kumar, ``{Approximate Outage Probability and Capacity for $\kappa$-$\mu$
  Shadowed Fading},'' \emph{IEEE Wireless Commun. Lett.}, vol.~4, no.~3, pp.
  301--304, June 2015.

\bibitem{Kalyani2015}
G.~Chandrasekaran and S.~Kalyani, ``{Performance Analysis of Cooperative
  Spectrum Sensing Over $\kappa$-$\mu$ Shadowed Fading},'' \emph{IEEE Wireless
  Commun. Lett.}, vol.~4, no.~5, pp. 553--556, Oct 2015.

\bibitem{Bhatnagar2015}
M.~R. Bhatnagar, ``{On the Sum of Correlated Squared $\kappa$-$\mu$ Shadowed
  Random Variables and Its Application to Performance Analysis of MRC},''
  \emph{IEEE Trans. Veh. Technol.}, vol.~64, no.~6, pp. 2678--2684, June 2015.

\bibitem{Celia2014}
C.~Garc{\'\i}a-Corrales, F.~J. Ca{\~n}ete, and J.~F. Paris, ``Capacity of
  $\kappa$-$\mu$ shadowed fading channels,'' \emph{Int J Antennas Propag.},
  vol. 2014, 2014.

\bibitem{Ermolova2011}
N.~Y. Ermolova and O.~Tirkkonen, ``{The $\eta$-$\mu$ Fading Distribution with
  Integer Values of $\mu$},'' \emph{IEEE Trans. Wireless Commun.}, vol.~10,
  no.~6, pp. 1976--1982, June 2011.

\bibitem{Gray1996}
B.~E. Fristedt and L.~F. Gray, \emph{A modern approach to probability
  theory}.\hskip 1em plus 0.5em minus 0.4em\relax Springer Science \& Business
  Media, 1996.

\bibitem{Billy}
P.~Billingsley, \emph{Convergence of probability measures}.\hskip 1em plus
  0.5em minus 0.4em\relax John Wiley \& Sons, 1999.

\bibitem{Nakagami1960}
M.~Nakagami, ``The m-distribution: A general formula of intensity distribution
  of rapid fading,'' \emph{Statistical Method of Radio Propagation}, 1960.

\bibitem{AlouiniBook}
\BIBentryALTinterwordspacing
M.~K. Simon and M.-S. Alouini, \emph{{Digital communication over fading
  channels}}.\hskip 1em plus 0.5em minus 0.4em\relax {Wiley-IEEE Press}, 2005.
  [Online]. Available: \url{http://www.worldcat.org/isbn/0471649538}
\BIBentrySTDinterwordspacing

\bibitem{Wang2003}
Z.~Wang and G.~B. Giannakis, ``{A simple and general parameterization
  quantifying performance in fading channels},'' \emph{IEEE Trans. Commun.},
  vol.~51, no.~8, pp. 1389--1398, Aug 2003.

\bibitem{Beaulieu2014}
N.~C. Beaulieu and S.~A. Saberali, ``A generalized diffuse scatter plus
  line-of-sight fading channel model,'' in \emph{2014 IEEE International
  Conference on Communications (ICC)}, June 2014, pp. 5849--5853.

\bibitem{Morales2010}
D.~Morales-Jimenez and J.~F. Paris, ``Outage probability analysis for
  $\eta$-$\mu$ fading channels,'' \emph{IEEE Commun. Lett.}, vol.~14, no.~6,
  pp. 521--523, June 2010.

\bibitem{Peppas2013}
K.~P. Peppas, G.~C. Alexandropoulos, and P.~T. Mathiopoulos, ``{Performance
  Analysis of Dual-Hop AF Relaying Systems over Mixed $\eta${-}$\mu$ and
  $\kappa${-}$\mu$ Fading Channels},'' \emph{IEEE Trans. Veh. Technol.},
  vol.~62, no.~7, pp. 3149--3163, Sept 2013.

\bibitem{Ermolova2014}
N.~Y. Ermolova and O.~Tirkkonen, ``{Outage Probability Analysis in Generalized
  Fading Channels with Co-Channel Interference and Background Noise:
  $\eta$-$\mu$/$\eta$-$\mu$, $\eta$-$\mu$/$\kappa$-$\mu$ and
  $\kappa$-$\mu$/$\eta$-$\mu$ Scenarios},'' \emph{IEEE Trans. Wireless
  Commun.}, vol.~13, no.~1, pp. 291--297, January 2014.

\bibitem{Tsiftsis2016}
T.~A. Tsiftsis, F.~Foukalas, G.~K. Karagiannidis, and T.~Khattab, ``{On the
  Higher Order Statistics of the Channel Capacity in Dispersed Spectrum
  Cognitive Radio Systems Over Generalized Fading Channels},'' \emph{IEEE
  Trans. Veh. Technol.}, vol.~65, no.~5, pp. 3818--3823, May 2016.

\bibitem{Pena2015}
J.~P. Pe{\~n}a-Mart{\'\i}n, J.~M. Romero-Jerez, and C.~Tellez-Labao,
  ``{Performance of Selection Combining Diversity in $\eta${-}$\mu$ Fading
  Channels With Integer Values of $\mu$},'' \emph{IEEE Trans. Veh. Technol.},
  vol.~64, no.~2, pp. 834--839, Feb 2015.

\bibitem{Lee1990}
W.~C.~Y. Lee, ``{Estimate of channel capacity in Rayleigh fading
  environment},'' \emph{IEEE Trans. Veh. Technol.}, vol.~39, no.~3, pp.
  187--189, Aug 1990.

\bibitem{Gunther1996}
C.~G. Gunther, ``{Comment on "Estimate of channel capacity in Rayleigh fading
  environment"},'' \emph{IEEE Trans. Veh. Technol.}, vol.~45, no.~2, pp.
  401--403, May 1996.

\bibitem{Alouini1999}
M.~S. Alouini and A.~J. Goldsmith, ``{Capacity of Rayleigh fading channels
  under different adaptive transmission and diversity-combining techniques},''
  \emph{IEEE Trans. Veh. Technol.}, vol.~48, no.~4, pp. 1165--1181, Jul 1999.

\bibitem{alouini1997}
M.-S. Alouini and A.~Goldsmith, ``Capacity of {Nakagami} multipath fading
  channels.''\hskip 1em plus 0.5em minus 0.4em\relax \emph{Proc. IEEE Vehicular
  Technology Conference}, 1997, pp. 358--362.

\end{thebibliography}

\end{document}